\crefname{hypothesis}{Hypothesis}{Hypotheses}
\crefname{fact}{Fact}{Facts}
\title{On the two paths theorem \\ and the two disjoint paths problem}
\author{Samuel Humeau$^\ast$ \and Damien Pous%
  \thanks{Plume, LIP, CNRS, ENS de Lyon, France \\
    \email{samuel.humeau@ens-lyon.fr}, \url{https://perso.ens-lyon.fr/samuel.humeau}\\
    \email{damien.pous@ens-lyon.fr}, \url{https://perso.ens-lyon.fr/damien.pous}}
}
\begin{document}

\maketitle

\begin{abstract}
  A tuple $(s_1,s_2,t_1,t_2)$ of vertices in an undirected graph
  is \emph{$2$-linked} when there are two vertex-disjoint paths,
  respectively from $s_1$ to $t_1$ and $s_2$ to $t_2$. A graph is
  $2$-linked when all such tuples are $2$-linked.

  We give a new and simple proof of the ``two paths theorem'', a
  characterisation of edge-maximal graphs which are not $2$-linked as
  \emph{webs}: particular near triangulations filled with cliques. Our
  proof works by generalising the theorem, replacing the four vertices
  above by an arbitrary tuple; it does not require major theorems such
  as Kuratowski's or Menger's theorems. Instead, it follows an
  inductive characterisation of generalised webs via \emph{parallel
    composition}, an operation consisting in taking a disjoint
  union and identifying some pairs of vertices.

  We use the insights provided by this proof to design a simple
  $O(nm)$ recursive algorithm for the ``two vertex-disjoint paths''
  problem. This algorithm returns either two disjoint paths, or an
  embedding of the input graph into a web.
\end{abstract}

% REQUIRED
\begin{keywords}
Graphs, connectedness, linkage, disjoint paths problem, two paths theorem
\end{keywords}

% REQUIRED
\begin{MSCcodes}
05C38, 05C75, 05C85
\end{MSCcodes}

\section{Introduction}
\label{sec:introduction}

\subsection{Two paths theorem}
\label{ssec:intro:thm}

A graph is called \emph{$2$-linked} if for all pairwise distinct
vertices $s_1,t_1,s_2,t_2$, there are two disjoint paths, respectively
from $s_1$ to $t_1$ and from $s_2$ to $t_2$. Such a pair of paths is
called an \emph{$(s_1,s_2,t_1,t_2)$-linkage}.

A \emph{rib} is a planar graph such that in one of its plane
embeddings, the outer face is a cycle of length four, all inner faces
are triangles, and all triangles are inner faces. A \emph{web} is a
graph constructed from a rib $R$ by adding one clique $K_T$ per
triangle $T$ of $R$, making vertices of $K_T$ adjacent to exactly the
vertices of~$T$. The outer face of a rib or of a web is called its
\emph{frame}.
\begin{theorem}[Two paths
  theorem~\cite{DBLP:journals/ejc/Thomassen80,DBLP:journals/dm/Seymour80}]
  \label{theo:main}
  The edge-maximal graphs which are not $2$-linked are the webs.
\end{theorem}

This theorem is easily shown equivalent to the following one,
where we call \emph{edge extension} of a graph any graph obtained
from it by adding a new edge.
\begin{theorem}[Two paths theorem variant]
  \label{theo:variant}
  Let $s_1,s_2,t_1,t_2$ be four distinct vertices in a graph $G$. The following are equivalent:
  \begin{enumerate}
  \item $G$ is a web with frame $s_1s_2t_1t_2s_1$,
  \item every edge extension of $G$ contains an $(s_1,s_2,t_1,t_2)$-linkage, but not $G$.
  \end{enumerate}
\end{theorem}

To prove this theorem, Thomassen~\cite{DBLP:journals/ejc/Thomassen80}
reduces the problem along small separators and then relies on
Kuratowski's theorem. Independently,
Seymour~\cite{DBLP:journals/dm/Seymour80} sketched a proof reducing
the problem to the $2$-connected case before doing a clever
combinatorial analysis of the graph's separators. Both proofs moreover
use Menger's theorem.

We give a new and direct proof of this theorem. We consider for that
the slight generalisation of ribs and webs where the frame (the outer
cycle in the near triangulation of the plane) does not need to have
length four, but some number $k\ge 3$.  Small webs are depicted in
Figure~\ref{fig:smallribs}. These are actually ribs; filling some of
their triangles with arbitrary cliques would give examples of webs
which are not ribs.
\begin{figure}[t]
  \centering
  \includegraphics{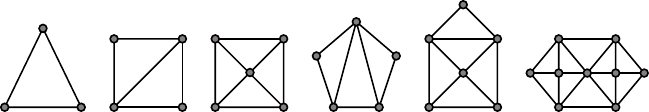}
  \caption{Small ribs, with frames of size three, four, four, five, five, and six.}
  \label{fig:smallribs}
\end{figure}

Accordingly, we also generalise the notion of linkage: a cycle
$C=x_1x_2\dots x_kx_1$ in a graph is \emph{crossed} when there are two
vertex-disjoint paths from $x_i$ to $x_j$ and $x_r$ to $x_s$ with
$i<r<j<s$, and \emph{crossless} otherwise. A graph $G$ is
\emph{maximally $C$-crossless} when $C$ is crossless in $G$, but
crossed in any edge extension of $G$.

Observe that none of the frames is crossed in the ribs from
Figure~\ref{fig:smallribs}, and that adding any edge to those graphs
makes their frame crossed. These facts remain unchanged under filling
triangles with cliques to obtain proper webs; this leads us to the
following generalisation of the two paths theorem:
\begin{theorem}[Generalised two paths theorem]
  \label{theo:gen}
  Let $G$ be a graph with a cycle~$C$. The following are equivalent:
  \begin{itemize}
  \item $G$ is a web with frame $C$,
  \item $G$ is maximally $C$-crossless.
  \end{itemize}
\end{theorem}

We define a simple operation on webs, which we call \emph{web
  composition}: two webs can be ``glued'' along induced subpaths of
their frames, resulting in a new web. We shall prove that all webs can
be constructed from cliques (with triangles as frames) by successive
web compositions. For instance, the ribs from
Figure~\ref{fig:smallribs} can be constructed as illustrated in
Figure~\ref{fig:smallribs:decomp}.
\begin{figure}[t]
  \centering
  \includegraphics{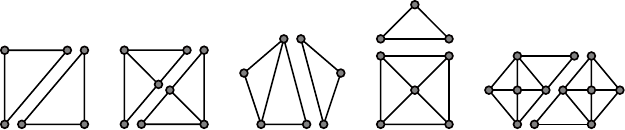}
  \caption{Decompositions of the non-trivial ribs from Figure~\ref{fig:smallribs}.}
  \label{fig:smallribs:decomp}
\end{figure}

Equipped with this tool, we proceed as follows to prove the difficult
implication of the theorem (maximally $C$-crossless graphs are webs
with frame $C$).  Suppose $C=x_1x_2y_1y_2x_1$ has size four, the
overall process is depicted in Figure~\ref{fig:process}. At first, we
only know that the graph contains $C$: edges in this cycle cannot be
used in a crossing. We define a path $P$ from $x_1$ to $y_1$,
initially composed of neighbours of $y_2$ only, which we transform to
make it closer and closer to $x_1x_2y_1$, progressively revealing the
web-structure of the starting graph.

\begin{figure}
  \centering
  \includegraphics[width=.8\linewidth]{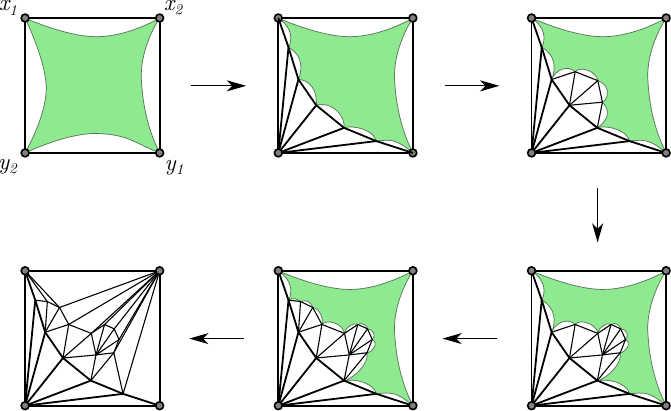}
  \caption{Recursively revealing the web-structure of a maximally crossless graph.}
  \label{fig:process}
\end{figure}

At each stage in Figure~\ref{fig:process}, the bottom-left part of the
graph is a web with frame $y_2Py_2$, and the top-right coloured one is
a maximally $x_2Px_2$-crossless graph. We process this latter part
recursively to obtain a web, and we compose it with the former along
the induced subpath $P$ to obtain the desired web.

\subsection{Two disjoint paths problem}
\label{ssec:intro:algo}

The two paths theorem is inherently related to the algorithmic problem
of finding vertex-disjoint paths between fixed pairs of vertices of a
given graph. This problem has been widely studied in the literature;
it is \textsc{NP}-complete on directed graphs~\cite{FORTUNE1980111},
and polynomial on undirected
graphs~\cite{DBLP:journals/dm/Seymour80,DBLP:journals/jacm/Shiloach80,DBLP:journals/ejc/Thomassen80}.

Many variations of this problem exist, such as optimisation
versions~\cite{https://doi.org/10.1002/net.3230140209}, acyclic
case~\cite{DBLP:journals/jacm/PerlS78}, etc. We focus on the following
variant: given vertices $s_1,t_1,s_2,t_2$ in a simple undirected
graph, find vertex-disjoint paths respectively from $s_1$ to $t_1$ and
from $s_2$ to $t_2$. If $s_1=s_2$ or $t_1=t_2$, then this problem is
solved by Menger's theorem: say $s_1=s_2$; the two paths exist if and
only if every vertex separator between $s_1$ and $\{t_1,t_2\}$ has at
least two elements. Various methods relating to flow networks exist to
solve such a problem. The difficulty arises when $s_1\neq s_2$ and
$t_1\neq t_2$. The corresponding algorithmic problem is usually called
the \emph{two disjoint paths problem}. Menger's theorem would apply to
find vertex-disjoint paths between $\{s_1,s_2\}$ and $\{t_1,t_2\}$,
but requiring paths from $s_1$ to $t_1$ and $s_2$ to $t_2$,
respectively, makes the problem harder.

Since the seminal works
in~\cite{DBLP:journals/dm/Seymour80,DBLP:journals/jacm/Shiloach80,DBLP:journals/ejc/Thomassen80},
several efficient algorithms have been introduced. For instance,
algorithms running in $O(n^2)$, $O(n+m\log n)$, and
$O(n+m\alpha(m,n))$ are respectively proposed in
\cite{DBLP:journals/siamcomp/KhullerMV92,gustedt,DBLP:journals/mst/Tholey06},
where $n$ and $m$ are the number of vertices and edges, respectively,
and $\alpha$ is the two parameters variant of the inverse of the
Ackermann function. All are based on improving the $O(nm)$ algorithm
by Shiloach~\cite{DBLP:journals/jacm/Shiloach80}. The latter relies on
a sequence of six reductions, three of which involve previous works:
instances can be reduced to the $3$-connected case (e.g. using the
algorithm from~\cite{doi:10.1137/0202012} to decompose a graph into
its triconnected components), and even the $3$-connected non-planar
case~\cite{DBLP:journals/jacm/PerlS78}, both in $O(n+m)$ time. By
Kuratowski's Theorem the resulting instance contains a topological
$K_{3,3}$ or $K_5$. Another reduction allows one to assume that the
instance does not contain a
$K_5$~\cite{10.1215/S0012-7094-68-03523-0}. To quote Shiloach ``We
have not written the solution as an explicit, long, and tedious
algorithm''.

We use the insights of our proof of the two paths theorem to design a
simple $O(nm)$ algorithm for the following slight generalisation of
the two disjoint paths problem. Given a tuple $T$ of at least three
vertices in a graph $G$, we call \emph{web completion} of $G$ with
regard to $T$ a set $F$ of edges such that $G+F$ is a web with frame
$T$.

\vspace{1cm}%
\noindent\textsc{Two Disjoint Paths or Web Completion}\\
\noindent\textbf{Input:} A graph with a tuple of distinct vertices.\\
\noindent\textbf{Output:} A crossing or a web completion.
\vspace{1cm}%

In contrast to Shiloach's method, no reductions are required: our
algorithm consists of two procedures of ten lines of pseudo-code each,
which are self-contained but for calls to some search algorithm
(e.g. depth-first or bread-first).

Its principle, in the decision version of the problem, goes as
follows: given a graph $G$ with a tuple of vertices $T$, if there are
no paths in $G$ between non-consecutive vertices of $T$, then return
No---$T$ is not crossed. Otherwise, compute a path $P$ between
non-consecutive vertices of $T$. If $P$ does not separate $T$, then
return Yes---$T$ is crossed. Otherwise, $G$ has the shape depicted on
the left of Figure~\ref{fig:seppathcompl}
\begin{figure}[t]
  \centering
  \includegraphics[width=.65\linewidth]{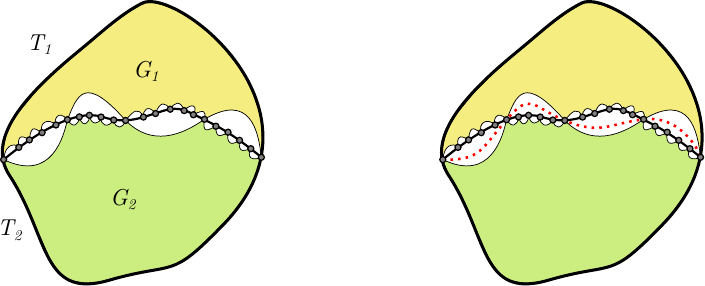}
  \caption{A frame-separating path and a completion for it.}
  \label{fig:seppathcompl}
\end{figure}
Let $T_1$ and $T_2$ be the two parts of $T$ which are separated by
$P$, The graphs $G_i$ with tuples $T_iP$ are then used as subinstances
for recursive calls.

Applying this process blindly is not sufficient. Indeed the recursive
calls may not be safe. More precisely, it is possible that $PT_i$ is
crossed in $G_i$ while $T$ is not in $G$ (for some $i\in\{1,2\}$). To
ensure safeness we first compute what we call a \emph{$P$-completion}.
Roughly, this is a set of edges on $P$'s vertices we add to $G$ such
that, using these edges, $P$ contains a shorter path $P'$ whose
associated subinstances are safe. In the easy cases, computing a
$P$-completion corresponds to adding the red dotted edges as on the
right of Figure~\ref{fig:seppathcompl}. Covering all cases is
delicate, but we eventually obtain a simple and efficient procedure
for computing $P$-completions.

\subsection{Contributions}
\label{ssec:intro:contrib}

Our three contributions are:
\begin{itemize}
\item an inductive characterisation of webs, via web composition (Section~\ref{sec:webs});
\item a direct proof of (a generalisation of) the two paths theorem (Section~\ref{sec:theorem}); 
\item a simple $O(nm)$ algorithm for the two disjoint paths problem (Section~\ref{sec:algorithm}).
\end{itemize}

\subsection{Outline}
\label{ssec:intro:outline}

We fix notations, make the above definitions precise, and provide a
few basic observations in Section~\ref{sec:defs}.

We study webs in Section~\ref{sec:webs}. We prove that their frames
are crossless (Proposition~\ref{prop:frame_crossless}) and that they
can be generated by repeated web compositions
(Proposition~\ref{prop:generating_webs_parallel}). We provide most
details for the sake of completeness, so that this section requires a
few pages.

We then move to the proof of the generalised two path theorem, in
Section~\ref{sec:theorem}.  We first prove maximality of webs amongst
crossless graphs, which is part of the easy direction in the
literature, and which is later instrumental for our algorithm.  We
finally prove that maximally crossless graphs are webs, which the
difficult implication in the works of
Thomassen~\cite{DBLP:journals/ejc/Thomassen80} and
Seymour~\cite{DBLP:journals/dm/Seymour80}. Despite its arrival late in
the text, this proof is self-contained: we only need to know from
Section~\ref{sec:webs} that webs are preserved by web composition
(Proposition~\ref{prop:web_stability}).

We finally present our algorithm for the two disjoint paths problem, in
Section~\ref{sec:algorithm}.

\clearpage
\section{Definitions}
\label{sec:defs}

\subsection{Standard notations and terminology}
\label{ssec:notation}

We only consider simple and undirected graphs, with the following
conventions. Given a graph $G$, we write $V(G)$ for its set of
\emph{vertices} and $E(G)$ for its set of \emph{edges}---two-elements
sets of vertices. We often use a graph to denote its underlying set of
vertices, thus writing $G$ for $V(G)$; similarly, we often write $x$
for the singleton set $\{x\}$. We note $n=|V(G)|$ the \emph{order} of
$G$ and $m=|E(G)|$ the \emph{size} of $G$. If $G'$ is another graph,
then $G\uplus G'$ is the \emph{disjoint union} of $G$ and $G'$.
Given $V\subseteq V(G)$ a subset of vertices, we note $G[V]$ for the
\emph{subgraph of $G$ induced by $V$}, and $G-V$ for the graph $G$
where all vertices of $V$ and incident edges have been removed. An
\emph{edge subgraph} of a graph $G$ is a graph $H$ with $V(H)=V(G)$
and $E(H)\subseteq E(G)$. The \emph{set of neighbours} of a vertex $x$
in a graph $G$ is denoted by $N_G(x)$.
If $F$ is a set of edges on $V(G)$ then $G+F$ is the graph obtained
from $G$ by adding the edges in $F$. Given two vertices $x$ and $y$
of $G$, we write $G/xy$ for the identification of $x$ and $y$
in $G$.
A \emph{walk} is a sequence of adjacent vertices; it is a \emph{path}
when no vertices are repeated; it is a \emph{cycle} if it contains at
least three vertices and exactly the first and last elements are
repeated. The \emph{length} of a path or cycle is its size, i.e., its
number of edges. An \emph{induced path} is a path that is induced as a
subgraph. Given a walk $P=x_1\dots x_k$, the \emph{interior} of $P$ is
$\mathring{P}=x_2\dots x_{k-1}$, and the \emph{reversal} of $P$ is
$\overline{P}=x_k\dots x_1$. Given a second walk $Q=y_1\dots y_r$
with, say, $x_i=y_j$, we note $Px_iQ$ for the walk from
$x_1\dots x_iy_{j+1}\dots y_r$. If $x_k=y_1$, we simply note $PQ$ for
$Px_kQ$. If $P$ and $Q$ are paths, $x_1=y_r$, and $x_s\neq y_t$ when
$1<s<k$ and $1<t<r$, then $PQ$ is a cycle.

Given two sets $H,K$ of vertices, a path $P=x_1\dots x_k$ is an
\emph{$H$-$K$ path} if $x_1\in H$, $x_k\in K$, and
$x_i\not\in H\cup K$ for all $1<i<k$. When $H=K$ we simply say that
$P$ is an \emph{$H$-path}. A third set of vertices $U$ is
\emph{$H$-connected to $K$} when there is a $H$-path from $U$ to $K$.
In this work, \emph{disjoint paths} always means vertex-disjoint
paths.
A subset $X$ of vertices \emph{separates} two sets of vertices $U$ and $V$
when all $U$-$V$ paths contain a vertex of $X$.

We write $K_n$ for the \emph{clique} of order $n$, i.e, the graph with
all edges between $n$ vertices.
A \emph{universal vertex} in a graph is a vertex adjacent to all other
vertices. For instance, the centre of a star graph is universal.  In
addition to the cliques, we define the following standard graph
families:
\begin{itemize}
\item the \emph{wheel graph} with $n\ge 4$ vertices is the graph $W_n$
  obtained from the cycle of size $n-1$ by the addition of a universal
  vertex;
\item the \emph{fan graph} with $n\ge 3$ vertices is the graph $F_n$
  obtained from the path of order $n-1$ by the addition of a universal
  vertex.
\end{itemize}
Examples of a wheel and a fan graph are shown in
Figure~\ref{fig:wheel_fan}.
\begin{figure}
  \centering
  \includegraphics{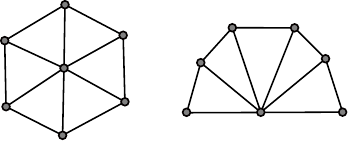}
  \caption{The wheel graph $W_7$ (left) and the fan graph $F_7$ (right).}
  \label{fig:wheel_fan}
\end{figure}
Observe that if $e$ is an edge in a wheel graph that is not incident
to the universal vertex, then $F_n\simeq W_n-e$.

\clearpage
\subsection{Generalised ribs and webs}
\label{ssec:ribs:webs}

\begin{definition}
  Given $k\ge 3$, a \emph{$k$-rib} is a planar graph such that in one
  of its plane embeddings, the outer face is a cycle of length $k$
  called its \emph{frame}, all inner faces are triangles, and all
  triangles are inner faces.

  A \emph{$k$-web} is a graph constructed from a $k$-rib $R$ by adding
  one clique $K_T$ per triangle $T$ of $R$ and making vertices of $K_T$
  adjacent to the vertices of $T$.
\end{definition}

In the sequel, when the size of the frame is not relevant, we often
simply call ribs and webs the graphs from this generalised definition.
Ribs are particular ``\emph{near-triangulations of the
  plane}''~\cite{TUTTE1980105} in which all triangles are faces, and
in the literature, webs are exactly the $4$-webs.
\begin{figure}
  \centering
  \includegraphics{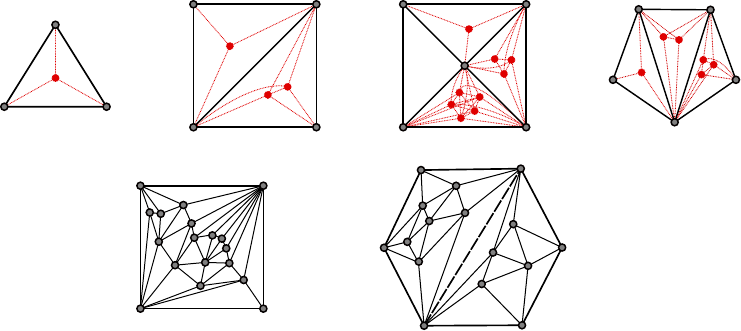}
  \caption{More examples of webs and ribs; dotted red vertices and
    edges are clique vertices and edges, while black ones are rib
    vertices and edges. The ribs of the webs on the top part are the
    only ribs of order three, four, and five.}
  \label{fig:web_examples}
\end{figure}
In addition to Figure~\ref{fig:smallribs} from the introduction,
examples of webs and ribs are depicted in
Figure~\ref{fig:web_examples}.
\begin{proposition}
  \label{prop:3_webs}
  The $3$-ribs are the triangles; the $3$-webs are the cliques of
  order at least $3$.
\end{proposition}

\begin{proposition}
  \label{prop:wheel_fan}
  Let $x$ be a vertex in a rib $R$ with frame $C$. The subgraph of $R$
  induced by $x$ and $N_R(x)$ is:
  \begin{itemize}
  \item a wheel graph when $x$ is not in $C$, or
  \item a fan graph when $x$ is in $C$.
  \end{itemize}
\end{proposition}
\begin{proof}
  Consider a plane embedding of $R$ in which $C$ bounds the outer
  face, all inner faces are triangles, and all triangles are inner
  faces. Let $H$ be the subgraph of $R$ induced by $x$ and its
  neighbours. In the embedding of $R$, the edges going out from $x$
  form a star whose centre is~$x$:

  \smallskip
  \begin{center}
    \includegraphics{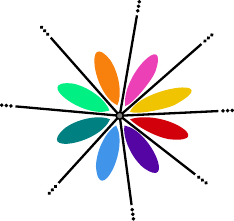}
  \end{center}

  \smallskip\noindent
  Each consecutive pair of edges bounds a common face represented by a coloured area
  in the drawing above. Either all such
  faces are inner, meaning are triangles, and $H$ is a wheel graph
  (and $x$ cannot bound the outer face: $x$ is not on $C$), or
  some of these faces are the outer face. The outer face being bounded
  by a cycle $C$, at most two edges incident to $x$ may bound the
  outer face. In this case $x$ is in $C$ and $H$ is a fan graph.
\end{proof}

\subsection{Crossings}
\label{ssec:crossings}

As explained in the introduction (Section~\ref{ssec:intro:thm}), the
appropriate generalisation of linkages is given by \emph{crossings}:
\begin{definition}
  Let $T=(x_1,\dots,x_k)$ be a tuple of at least three pairwise
  distinct vertices in a graph $G$. A \emph{crossing} of $T$ is a pair
  of disjoint $T$-paths, from $x_i$ to $x_j$ and $x_r$ to $x_s$,
  respectively, such that $i<r<j<s$. The tuple $T$ is \emph{crossed}
  if it admits a crossing; it is \emph{crossless} otherwise. The graph
  $G$ is \emph{maximally $T$-crossless} when $T$ is crossless and in
  any edge extension of $G$, $T$ crossed.
\end{definition}
Two examples of crossings are given in Figure~\ref{fig:crossings};
they are crossings in the bottom ribs of Figure~\ref{fig:web_examples}
but for a tuple different from their frames.
\begin{figure}
  \centering
  \includegraphics{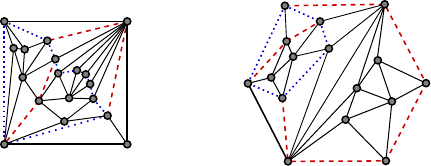}
  \caption{Two examples of crossings (red dashed paths) for some cycles (blue dotted cycles).}
  \label{fig:crossings}
\end{figure}

Observe that a graph is $2$-linked precisely when all tuples of four
distinct vertices are crossed; therefore, the standard two paths
Theorem~\ref{theo:main} is precisely our generalised
Theorem~\ref{theo:gen} when $|C|=4$. Moreover, in the case where
$|C|=3$, $C$ cannot be crossed, thus Theorem~\ref{theo:gen} holds
since $3$-webs are cliques by Proposition~\ref{prop:3_webs}.

\section{Web compositions and decompositions}
\label{sec:webs}

\subsection{Edge composition}
\label{par:edge_composition}

The graph operation consisting in taking the disjoint union of two
graphs and then identifying two edges $e$ and $e'$, one from each
graph, is called \emph{edge composition along $e$ and $e'$}.

Edge composition is illustrated in
Figure~\ref{fig:edge_compositions}.
\begin{figure}
  \centering
  \includegraphics{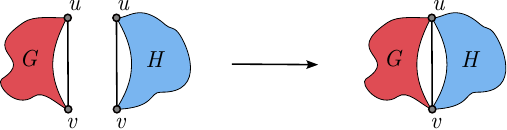}
  \caption{Edge composition of two graphs.}
  \label{fig:edge_compositions}
\end{figure}
Edge compositions of webs along edges of their frames always produce
webs: plane embeddings can be composed in this way, all triangles
remain faces and all inner faces remain triangles. If the identified
edges are denoted $e=xy$ and $e'=x'y'$ and the frames of the composed
webs are respectively $Pxy$ and $P'x'y'$, then the resulting web has
frame $P\overline{P'}$.

An example is the rib in the bottom right corner of
Figure~\ref{fig:web_examples}: the black dashed edge results from the
edge identification in the edge composition of two ribs. Similarly,
three of the six ribs from Figure~\ref{fig:smallribs} were obtained as
edge compositions of ribs.

In contrast, the rib in the bottom left corner of
Figure~\ref{fig:web_examples}, as well as the three other ribs from
Figure~\ref{fig:smallribs} cannot be expressed as an edge composition
of webs.

\subsection{Path composition}
\label{par:path_composition}

To decompose webs in general (except for the base case of 3-webs,
i.e., cliques), we generalise edge composition to identify subpaths of
frames instead of single edges.

If $P=x_1\dots x_l$ and $P'=x'_1\dots x'_l$ are non-empty paths of
length $l$ in two graphs $G$ and $H$, respectively, we define the
\emph{path composition} of $G$ and $H$ along $P$ and $P'$ as the
graph
\[
  (G\uplus H)/x_1x_1'/\dots/x_lx_l'.
\]
When $P=P'$ we simply talk of \emph{path composition along $P$}.

In the particular case where $P$ and $P'$ are reduced to one edge
each, that is $l=1$, the notion coincides with edge composition.

Unfortunately, path compositions of webs along subpaths of their
frames are not always webs (although this is true for edge
compositions, as explained above). A counter-example is given in
Figure~\ref{fig:path_composition_no_web}; to fix this issue, we have
to restrict to induced subpaths of the frames.
\begin{figure}
  \centering
  \includegraphics{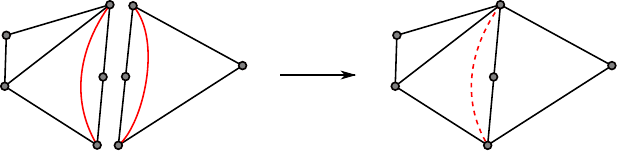}
  \caption{Path composition of two ribs. The resulting graph is not a
    rib (nor a web). Indeed, the dashed red edge, resulting from the
    collateral identification of the two bended edges on the left, is
    shared by three triangles; thus, at least one of these triangles
    cannot be a face in a plane embedding of the path composition.}
  \label{fig:path_composition_no_web}
\end{figure}

\subsection{Web composition}
\label{par:web_composition}

Let $G$ and $H$ be two webs and let $P$ and $P'$ be subpaths of their
respective frames of the same length. If $P$ and $P'$ are induced
paths in $G$ and $H$, respectively, we call \emph{web composition} of
$G$ and $H$ along $P$ and $P'$ the path composition of $G$ and $H$
along $P$ and $P'$.

Webs are stable under web composition:
\begin{proposition}
  \label{prop:web_stability}
  The web composition along $P$ and $P'$ of two webs $G$ and $H$ with
  respective frames $P_GP$ and $P_HP'$, is a web with frame $P_G\overline{P_H}$.
\end{proposition}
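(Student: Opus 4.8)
The plan is to reduce to the level of ribs and argue topologically by gluing plane embeddings along the identified subpath, then transfer the cliques. Write $R_G$ and $R_H$ for the ribs of $G$ and $H$. Since the parallel composition of the webs identifies only rib vertices (those lying on the frames $P_GP$ and $P_HP'$), it leaves every clique vertex untouched; it therefore suffices to show that the parallel composition of $R_G$ with $R_H$ along $P$ and $P'$ is a rib $R$ with frame $P_GP_H$, and that its triangles are exactly those of $R_G$ together with those of $R_H$.

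First I would fix plane embeddings of $R_G$ and $R_H$ in which their frames bound the outer faces, viewing each as a triangulated disk whose boundary circle is split into the two arcs $P$, $P_G$ (resp.\ $P'$, $P_H$). Placing $R_H$ on the outer side of $P$ and identifying $P$ with $P'$ glues the two disks along a common boundary arc, which yields again a disk and hence a plane embedding of the composition. Reading off its boundary gives the closed walk $P_GP_H$; since $G$ and $H$ share only the arc vertices and the interiors of $P_G$, $P_H$ avoid $P=P'$, this walk is a cycle of the expected length, so it is a legitimate frame.

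It then remains to check the two face conditions of a rib, and this is where the induced hypothesis does the work. Every bounded face of $R$ is either an inner face of $R_G$ or of $R_H$, all of which are triangles, so all inner faces are triangles. For the converse --- all triangles are faces --- I would observe that the only shared vertices are those of $P$, so any triangle of $R$ lies entirely in $R_G$, entirely in $R_H$, or entirely on $P$; the last case is impossible because three pairwise adjacent vertices cannot lie on an induced path. Thus every triangle already sits inside one of the two ribs, where it is a face, and this face survives the gluing since each arc edge separates an inner triangle of $R_G$ from an inner triangle of $R_H$ while every other edge keeps both of its incident faces. The main obstacle is precisely this step: without the induced hypothesis a chord of $P$ in $G$ and a chord of $P'$ in $H$ would be identified into a single edge carrying a triangular face on each side of each rib, overloading it and destroying the property that all triangles are faces --- exactly the failure displayed in the bad example above.

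Finally, since no triangle spans the gluing, the triangles of $R$ are the disjoint union of those of $R_G$ and those of $R_H$, and no triangle of $R_G$ is identified with one of $R_H$. Re-attaching the untouched cliques then produces exactly the web obtained from $R$ by adding one clique per triangle, namely a web with frame $P_GP_H$, which is the desired conclusion.
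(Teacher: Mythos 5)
Your proposal is correct and follows essentially the same route as the paper: reduce to the ribs, glue the two plane embeddings along the identified arcs (the paper makes this rigorous by citing Tutte's prescribed-polygon embedding theorem where you argue via gluing triangulated disks), use the hypothesis that $P$ and $P'$ are induced to rule out doubled chords that would overload an edge, and then re-attach the untouched cliques. Your explicit case analysis showing that every triangle of the composition lies wholly in $R_G$ or $R_H$ (and that none is identified across the gluing) is a welcome elaboration of a point the paper treats in one sentence.
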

\begin{proof}
  Let $R_G$ and $R_H$ be the ribs of $G$ and $H$, respectively, and
  $R$ the web composition of $R_G$ and $R_H$ along $P$ and $P'$. We
  prove that $R$ is a rib.

  Any planar graph admits a plane embedding for which the outer face
  is sent exactly onto a given prescribed polygon of the plane (see
  \cite{article/lms/tutte60}).

  Consider plane embeddings of $R_G$ and $R_H$ for which the outer
  faces are sent onto two polygons whose intersection is precisely $P$
  and $P'$. As $P$ and $P'$ are induced, there are no parallel edges
  in the union of the plane embeddings of $R_G$ and $R_H$. Hence, this
  union is a plane embedding of $R$.

  As all inner faces of the embedding of $R_G$ or $R_H$ are triangles,
  so are inner faces of $R$. As $P$ and $P'$ are induced, a triangle
  of $R$ is either a triangle of $R_G$ or of $R_H$, but not of both.
  Hence every triangle of $R$ is a face in this embedding. So $R$ is a
  rib.

  Adding back the clique vertices of $G$ and $H$ on their respective
  triangles gives a web, proving that the web composition of $G$ and
  $H$ along $P$ and $P'$ is a web with rib $R$ and frame $P_G\overline{P_H}$.
\end{proof}

\subsection{Crossings in webs} An immediate yet important observation
is that in a web with rib $R$, the intersection with $R$ of a path
with endpoints in $R$ yields an induced path. Indeed, if such a path
$P$ uses a clique vertex $u$, then we can take the longest subpath $Q$
of $P$ containing $u$, and whose inner vertices are all clique
vertices. The endpoints of $Q$ are on a triangle $T$ of the rib. In
particular $Q$ can be replaced in $P$ by an edge of $T$:
\begin{center}
  \includegraphics{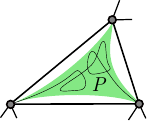}
\end{center}
In this figure the green triangular shape in the centre of the
triangle represents all the clique vertices of $T$.

Conversely, induced paths whose endpoints are on the rib are entirely
on the rib.

\medskip

Thomassen's argument~\cite{DBLP:journals/ejc/Thomassen80} proving that
frames of $4$-webs are not $2$-linked, generalises to $n$-webs:
\begin{proposition}
  \label{prop:frame_crossless}
  Frames of webs are crossless.
\end{proposition}
\begin{proof}
  Suppose by contradiction that the frame of a web is crossed. The
  union of the frame and the paths from a crossing yields a
  topological $K_4$.
  By intersecting paths with the rib, we can assume a topological
  $K_4$ in the rib.
  In a planar embedding of the rib in which the frame bounds the outer
  face, this provides an outerplanar plane embedding of $K_4$, a
  contradiction.
\end{proof}

\subsection{Edge decompositions}
\label{ssec:edge:decomp}

We characterise when a web with frame $C$ is expressible as the edge
composition of two webs using the $C$-connectivity of vertices; we
first need the following lemma.
\begin{lemma}
  \label{lem:4_frame_C_connectivity}
  Let $x$ be a vertex in a web $G$ with frame $C$ such that
  $|C|\ge 4$. There exists a $C$-path $P$ between non-consecutive
  vertices $c_x$ and $c_x'$ of $C$ such that $x$ is $C$-connected to
  all vertices of $P$.
\end{lemma}
\begin{proof}
  It is sufficient to prove the statement for ribs. Indeed, except
  perhaps for its endpoints, we have seen that induced paths in webs
  are paths in their ribs. Hence, two rib vertices are $C$-connected
  in $G$ if and only if they are $C$-connected in the rib of $G$; two
  clique vertices are $C$-connected in $G$ if and only if two of their
  rib neighbours are $C$-connected in the rib of $G$.

  When $x$ is a vertex in $C$, by Proposition~\ref{prop:wheel_fan},
  the neighbourhood of $x$ induces a fan graph $F$. Write $P$ for the
  path $F-x$ whose endpoints are the predecessor and successor of $x$
  on $C$. If $P$ is a $C$-path, then we take the endpoints of $P$ for
  $c_x$ and $c_x'$; $P$ is a $C$-path from $c_x$ to $c_x'$ whose
  vertices are all $C$-connected to $x$. If $P$ is not a $C$-path,
  then $P$ contains a vertex $y$ of $C$ which is adjacent to $x$.
  Since $y$ is inner in $P$, $x$ and $y$ are non-consecutive on $C$.
  We take $x$ and $y$ for $c_x$ and $c_x'$, and $Q=c_xc_x'$. The path
  $Q$ is a $C$-path from $c_x$ to $c_x'$ whose vertices are all
  $C$-connected to $x$.

  Assume that $x$ is not a vertex of $C$. Observe that ribs are
  connected, and consider an induced $C$-path $P$ from $x$ to a vertex $c$
  of $C$. By Proposition~\ref{prop:wheel_fan}, the neighbourhood of
  $c$ induces a fan graph in $G$. Hence, this neighbourhood provides
  paths from the penultimate vertex $u$ of $P$ to the two neighbours
  $d$ and $d'$ of $c$ on $C$, and these two paths $Q$ and $R$ are
  disjoint except possibly for $u$.
  \begin{figure}
    \centering
    \includegraphics{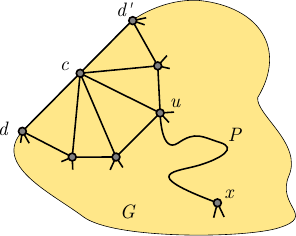}
    \caption{The case where $c_x=d$ and $c'_x=d'$ in the proof of Lemma~\ref{lem:4_frame_C_connectivity}.}
    \label{fig:rib_reaching_C}
  \end{figure}

  For $c_x$ and $c_x'$ we take the first vertices of $Q$ and $R$ which
  are on $C$, respectively (see Figure~\ref{fig:rib_reaching_C} for an
  illustration when $c_x$ and $c_x'$ are $d$ and $d'$).

  All vertices of $Q$ and $R$ are neighbours of $c$. For the sake of
  contradiction, suppose that $c_x$ and $c_x'$ are consecutive on
  $C$; $c_xc_x'cc_x$ is a triangle of $R$, and hence an inner face.
  The edge $cc_x$ bounds at least three faces in the plane embedding
  of $R$ in which the outer face is bounded by $C$, all inner faces
  are triangles and all triangles inner faces. In a plane embedding of
  a planar graph, an edge may bound at most two faces, a
  contradiction.

  The paths $PuQ$ and $PuR$ are $C$-paths from $x$ to $c_x$ and to
  $c_x'$, respectively. They induce a $C$-path from $c_x$ to $c_x'$
  whose vertices are all $C$-connected to $x$.
\end{proof}

Thanks to the previous lemma, we can prove the following
characterisation.  The second and third items in the proposition are
refinements of the first one, which are useful in the sequel.  We may
observe how this characterisation applies to the webs presented in
Figures~\ref{fig:smallribs} and~\ref{fig:web_examples}.
\begin{proposition}
  \label{prop:connectivity_webs}
  Let $G$ be a web with frame $C$. The following holds:
  \begin{enumerate}
  \item $G$ is the edge composition of two webs if and only if there
    is a pair of vertices of $G$ which are not $C$-connected.
  \item for all vertices $x,y$, either $x$ is $C$-connected to $y$, or
    there exists a pair of vertices $c,c'\in C-x-y$
    separating $x$ from $y$ and such that $G$ is the
    edge composition of two webs along $cc'$,
  \item for every vertex $x$ and every subpath $P$ of $C$, either $x$
    is $C$-connected to $P$, or there are vertices $c,c'$ of $C-P-x$
    separating $x$ from $P$ and such that $G$ is the edge composition
    of two webs along $cc'$.
  \end{enumerate}
\end{proposition}
\begin{proof}
  The first item of the statement is a direct consequence of the
  second one, which we prove now.

  Let $G$ be a web with frame $C$ and two vertices $x$ and $y$ which
  are not $C$-connected.

  By Proposition~\ref{prop:3_webs}, all pairs of vertices in $3$-ribs
  with frames triangles $T$ are $T$-connected. Hence we assume
  $|C|\ge 4$.

  By Lemma~\ref{lem:4_frame_C_connectivity}, $x$ is $C$-connected to
  at least two distinct non-consecutive vertices $c_x$ and $c_x'$ of
  $C$. Write $Q$ for a $C$-path from $c_x$ to $c_x'$ containing
  vertices all $C$-connected to $x$. Let $P$ and $P'$ be the two paths
  from $c_x$ to $c_x'$ such that $P\overline{P'}=C$.

  We prove that the vertices of $C$ which are $C$-connected to $y$ are
  either all on $P$ or all on $P'$. For the sake of contradiction,
  suppose that $y$ is $C$-connected to inner vertices $c_y$ and $c_y'$
  of $P$ and $P'$, respectively. Define a $C$-path $R$ from $P$ to
  $P'$ as follows:
  \begin{itemize}
  \item when $y$ is not on $C$, $R$ is a $c_y$-$c_y'$ path
    containing only vertices $C$-connected to $y$,
  \item when $y$ is on $C$, we assume without loss of generality that $y$
    is an inner vertex of $P$. For $R$ we take a $y$-$c_y'$ path.
  \end{itemize}
  Since $x$ and $y$ are not $C$-connected, $Q$ and $R$ are disjoint
  and $(Q,R)$ is a crossing of $C$, a contradiction.

  Hence, we assume that the vertices of $C$ to which $y$ is
  $C$-connected are all on $P$, and without loss of generality, we
  assume that $P$ is of minimum length such that its endpoints $c_x$
  and $c_x'$ are $C$-connected to $x$ and all vertices $C$-connected
  to $y$ are on $P$.

  We chose $c=c_x$ and $c'=c_x'$. We prove that $\{c,c'\}$ separates
  $\mathring{P}$ from $\mathring{P'}$. For the sake of contradiction,
  let $R$ be a $C$-path from $\mathring{P}$ to $\mathring{P'}$. Since
  $x$ is not $C$-connected to $\mathring{P}$, the vertices of $R$ are
  not $C$-connected to $x$ either. Recall that $Q$ is a $C$-path from
  $c$ to $c'$ whose vertices are all $C$-connected to $x$. The pair
  $(Q,R)$ is a crossing of $C$, a contradiction.

  Because $\{c,c'\}$ separates $\mathring{P}$ from $\mathring{P'}$, a
  plane embedding of the rib of $G$ in which the outer face is bounded
  by $C$, all inner faces are triangles and all triangles are inner
  faces, has the left-hand side shape below:
  \begin{center}
    \includegraphics{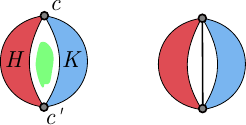}
  \end{center}
  Without being too formal on the manipulation of such embeddings, the
  green area must correspond to an inner face of the embedding,
  meaning a triangle. This face being incident to both $c$ and $c'$,
  that implies that $cc'$ is an edge of $G$. The fact that the rib of
  $G$ is given by an edge composition of two ribs along $cc'$ is then
  direct by looking at the inner faces and triangles of $H$ and $K$.
  Hence, $G$ is an edge composition of $H$ and $K$ in which the clique
  vertices are restored.

  We prove that $\{c,c'\}$ separates $x$ from $y$. For the sake of
  contradiction, let $R$ be an $x$-$y$ path in $G-c-c'$. Since $x$ and
  $y$ are not $C$ connected, $R$ contains at least one vertex of $C$.
  Write $c_1,\dots,c_n$ for the vertices of $R$ which are vertices of
  $C$: $R=Rc_1Rc_2\dots c_nR$. Since $x$ and $y$ are not $C$-connected
  to $\mathring{P}$ and $\mathring{P'}$, respectively, we have that
  $c_1$ and $c_n$ are on $\mathring{P'}$ and $\mathring{P}$,
  respectively. Hence, there exists an index $i$ such that $c_i$ is on
  $\mathring{P'}$ and $c_{i+1}$ on $\mathring{P}$. The path
  $c_iRc_{i+1}$ is a $C$-path from $\mathring{P'}$ to $\mathring{P}$
  in $G-c-c'$, a contradiction.

  The last item is proved along the same lines, replacing $y$
  with a subpath of $C$.
\end{proof}

\begin{corollary}
  \label{corol:path_web}
  In a web whose frame $C$ has length at least four, there is a
  $C$-path whose endpoints are non-consecutive vertices of $C$.
\end{corollary}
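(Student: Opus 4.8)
The plan is to reduce the statement to the connectivity result Lemma~\ref{lem:connectivity_lemma_webs}, after recording a simple reformulation: for two frame vertices $x,y\in C$, a $C$-path from $x$ to $y$ is exactly the same thing as a witness that $x$ is $C$-connected to $y$. Indeed, $\{x\}\cup C\cup\{y\}=C$ whenever $x,y\in C$, so a $(\{x\}\cup C\cup\{y\})$-path from $x$ to $y$ is precisely a path whose interior avoids $C$, i.e.\ a $C$-path. Finding a $C$-path with non-consecutive ends thus amounts to exhibiting two non-consecutive frame vertices that are $C$-connected---or, as the second case below shows, a chord of the frame.

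Since $|C|\ge 4$, I would first fix two non-consecutive vertices $x,y$ of $C$ and apply Lemma~\ref{lem:connectivity_lemma_webs} to them. In the first alternative of that lemma, $x$ is $C$-connected to $y$, and by the reformulation above the witnessing path is a $C$-path whose ends $x,y$ are non-consecutive by construction; we are done immediately.

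The interesting case is the second alternative, where the lemma provides a pair $c,c'\in C\setminus\{x,y\}$ separating $x$ from $y$, together with an edge $cc'$ (coming from the edge composition). The only real work---and the step I expect to be the main obstacle---is to check that $c$ and $c'$ are themselves non-consecutive on $C$; granting this, the edge $cc'$ is a length-one $C$-path (its interior is empty) whose ends are non-consecutive, which is exactly what we want. To see that $c,c'$ are non-consecutive I would use that $C$ is a cycle: the two frame arcs joining $x$ to $y$ are two $x$-$y$ paths, so for $\{c,c'\}$ to separate $x$ from $y$ each arc must meet $\{c,c'\}$; as $c\neq c'$, each arc contains exactly one of them in its interior. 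Hence $c$ and $c'$ lie in the interiors of distinct arcs, which are separated on the cycle by $x$ and $y$, so $c$ and $c'$ cannot be adjacent along $C$.

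This completes the plan. No planarity or further structural input is needed beyond Lemma~\ref{lem:connectivity_lemma_webs}: the whole argument is a short case analysis, the cycle-and-separator bookkeeping of the second case being the only delicate point.
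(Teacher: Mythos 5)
Your proposal is correct and follows essentially the same route as the paper: fix two non-consecutive frame vertices, apply Lemma~\ref{lem:connectivity_lemma_webs}, and in the separated case use the edge from the edge composition as the desired $C$-path. Your cycle-and-separator argument for why the edge's ends are non-consecutive is sound and merely spells out a step the paper asserts in one clause (``as $\{u,u'\}$ separates $c$ from $c'$'').
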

\begin{proof}
  As $|C|\ge 4$, consider two non-consecutive vertices $c,c'$ of $C$.
  If $c$ is $C$-connected to $c'$ then this provides a $C$-path whose
  endpoints are non-consecutive vertices of $C$. Otherwise, by
  Proposition~\ref{prop:connectivity_webs}, the web is given as an
  edge composition along some edge $uu'$ between nonconsecutive
  vertices of $C$. This edge is a $C$-path whose endpoints are
  non-consecutive vertices.
\end{proof}

\subsection{Web decompositions}
\label{par:web_decompositions}

While the bottom-left web in Figure~\ref{fig:web_examples} and the
third and sixth ribs in Figure~\ref{fig:smallribs} do not arise as
edge compositions, they can be expressed as web compositions. In fact,
we have the following decomposition result:
\begin{proposition}
  \label{prop:generating_webs_parallel}
  Every web whose frame has length at least four is a web composition
  of two webs.
\end{proposition}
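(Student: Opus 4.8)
The plan is to reduce to ribs and then to split a rib along a suitable induced path. First I would observe, exactly as in the proofs above, that it suffices to treat ribs: a web is obtained from its rib $R$ by attaching a clique to each triangle, and the path along which we shall cut will be induced, so that each triangle of $R$---and hence each attached clique---lies entirely on one side of the cut. Thus a decomposition of $R$ lifts to a decomposition of the whole web by distributing each clique to the side carrying its triangle.

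So let $R$ be a rib with frame $C=c_1\dots c_kc_1$ and $k\ge 4$. By Corollary~\ref{corol:path_web} there is a $C$-path whose ends $c_i$ and $c_j$ are non-consecutive vertices of $C$; using the observation opening this section I may take this path to be induced and contained in the rib, and I call it $P$ (possibly $P$ is just the chord $c_ic_j$). I then fix a plane embedding of $R$ in which $C$ bounds the outer face. Since the interior vertices of $P$ lie off $C$, the path $P$ runs through the open disk bounded by $C$, so by the Jordan curve theorem it cuts the closed disk into two closed regions. Let $R_1$ and $R_2$ be the subgraphs of $R$ induced by the vertices lying in each region; they meet exactly along $P$. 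Writing $A_1$ and $A_2$ for the two arcs of $C$ from $c_i$ to $c_j$, the boundary cycles are $A_1\overline{P}$ and $A_2P$, and because $c_i,c_j$ are non-consecutive each $A_r$ has at least two edges, so each boundary cycle has length at least three and neither region is trivial.

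The crux is to verify that $R_1$ and $R_2$ are genuine ribs and that their web composition along $P$ returns $R$. For the first point, every inner face of $R$ sits inside exactly one region and remains a triangular face there, while the new outer face of $R_r$ is bounded by $A_r$ together with $P$; the delicate part---and the one I expect to be the main obstacle---is checking that this really is a near-triangulation, i.e.\ that no triangle of $R$ is destroyed or created by the cut and that $A_rP$ bounds a single face-free region. This is where inducedness of $P$ is essential: it forbids any chord among the vertices of $P$, and thereby rules out precisely the pathology exhibited in Section~\ref{sec:web_composition}, where a shared edge ended up incident to three triangles. For the second point, $P$ being induced means that the only edges among its vertices are its own, so the edges of $R$ are partitioned between $R_1$, $R_2$ and $P$; hence identifying the copies of $P$ in $R_1\uplus R_2$ recreates $R$ without duplicating or losing edges. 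Re-adding the cliques and applying Proposition~\ref{prop:web_stability} to $R_1$ and $R_2$ (whose shared path $P$ is induced) then exhibits the web as a web composition, with frame $A_1A_2=C$ as required.
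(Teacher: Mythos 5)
Your proof is correct and takes essentially the same route as the paper's: an induced $C$-path between non-consecutive frame vertices obtained from Corollary~\ref{corol:path_web}, a split of a plane embedding of the rib along that path into two ribs, and recombination (with the cliques re-attached) via Proposition~\ref{prop:web_stability}. The additional details you supply---the Jordan-curve argument, the partition of the edges among $R_1$, $R_2$ and $P$, and the non-triviality of the two boundary cycles---only make explicit what the paper's proof leaves implicit.
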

\begin{proof}
  Let $G$ be a web with rib $R$ and frame $C$ with $|C|\geq 4$. By
  Corollary~\ref{corol:path_web} there is a $C$-path $P$ whose
  endpoints are non-consecutive vertices of $C$. We may assume without
  loss of generality that $P$ is an induced path, i.e., a path in $R$.
  Considering a plane embedding of the latter, the path $P$ separates
  $R$ into two planar graphs whose outer faces are bounded by the two
  cycles distinct from $C$ and induced by $C$ and $P$, all their faces
  are faces of $R$ and hence triangles, and all their triangles are
  triangles of $R$ and hence faces: those two planar graphs are ribs.
  Adding back the clique vertices of $G$ on their respective
  triangles, the resulting pair of webs generate $G$ by taking their
  web composition along $P$.
\end{proof}
Altogether, Propositions~\ref{prop:3_webs},~\ref{prop:web_stability},
and~\ref{prop:generating_webs_parallel} show that the webs are
precisely those graphs obtained by repeated web compositions, starting
from cliques of order at least three (whose frames are any triangle).
Note that such decompositions are not unique in general.

\section{Two paths theorem}
\label{sec:theorem}
\mbox{}\medskip

We now prove the generalised two paths theorem
(Theorem~\ref{theo:gen}).  We have already seen that web frames are
crossless (Proposition~\ref{prop:frame_crossless}); we first show that
the webs are maximally so. The proof of this result will be reused to
exhibit crossings in our presentation of the algorithm for the two
disjoint paths problem (Section~\ref{sec:computing_crossings_and_web_completions}).
\begin{lemma}
  \label{lem:web_are_edge_maximal}
  The frame of a web $G$ is crossed in all edge extensions of $G$.
\end{lemma}
\begin{proof}
  We reason by induction on the order of $G$. Let $C$ be the frame of
  $G$. Since $3$-webs are cliques (Proposition~\ref{prop:3_webs}) they
  cannot be extended with a new edge. We may thus assume $|C|\ge 4$,
  and hence by Proposition~\ref{prop:generating_webs_parallel}, that
  $G$ is the web composition of two webs $H$ and $K$ along some
  induced $C$-path $P$ of $G$. Let $C_H$ and $C_K$ be the frames of
  $H$ and $K$, respectively. By induction hypothesis, $C_H$ (resp.
  $C_K$) is crossed in any edge extension of $H$ (resp. $K$).

  Let $e=xy$ be an edge which is not in $G$. We prove that $C$
  is crossed in $G+e$.
  To build such a crossing we will rely on the following claim:

  \medskip
  \begin{minipage}{.9\linewidth}
    \begin{claim}
      \label{claim:connected}
      Every vertex of $H$ is $C_H$-connected to $C_H-P$ in $H$, and
      similarly in $K$.
    \end{claim}
    \emph{Proof.} As $P$ is induced in both $H$ and $K$, these
    webs cannot be expressed as some edge composition along some edge
    between non-consecutive vertices of $P$. The claim follows by the
    third item of Proposition~\ref{prop:connectivity_webs}\hfill$\lhd$
  \end{minipage}
  \medskip

  \noindent
  Assume that $x$ and $y$ are in $H-P$ and $K-P$, respectively. As $x$
  is $C_H$-connected to $C_H-P$ in $H$ and $y$ is $C_K$-connected to
  $C_K-P$ in $K$, we can assume to have an $x$-$(C_H-P)$ (resp.
  $y$-$(C_K-P)$) path $P_x$ in $H$ (resp. $P_y$ in $K$) disjoint from
  $P$. The pair $(P_xeP_y,P)$ is a crossing of $C$ in $G+e$.

  Otherwise $e$ has both its endpoints in either $H$ or $K$, say in $H$. By
  induction hypothesis, let $(P_1,P_2)$ be a crossing of $C_H$ in
  $H+e$. We reason by case analysis depending on which endpoints of $P_1$
  and $P_2$ are on $P$. Up-to symmetries between $P_1$ and $P_2$,
  there are five cases, represented in Figure~\ref{fig:five_cases}
  with the shape of the crossing of $C$ we build in $G+e$. The first
  three cases, in which both $P_1$ and $P_2$ have at least one end not
  on $P$ are self-explanatory and just require the use of $P$ to
  extend $P_1$ and/or $P_2$ to form a crossing of $C$ in $G+e$.
  \begin{figure}
    \centering
    \includegraphics{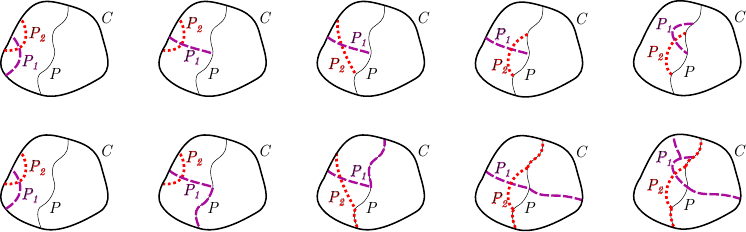}
    \caption{The five subcases in the proof of
      Lemma~\ref{lem:web_are_edge_maximal} (top), with crossings of
      the outer cycle we deduce in each case (bottom).}
    \label{fig:five_cases}
  \end{figure}

  In the fourth case of Figure~\ref{fig:five_cases}, if the end of
  $P_1$ which is on $P$ is denoted by $z$, then, since $z$ is
  $C_K$-connected to $C_K-P$ in $K$, there exists a $z$-$(C_K-P)$ path
  $P'$. Without loss of generality, assume that $P_2$ is from $x$ to
  $y$; the pair $(PxP_2yP,P_1zP')$ is a crossing of $C$ in $G+e$.

  In the fifth case, we first observe that $P_1$ and $P_2$ cannot be
  reduced to edges as $P$ is induced. All their inner vertices being
  $C_H$-connected to $C_H-P$, consider an associated path $P'$. Up-to
  replacing $P'$ by its subpath starting at the last vertex of
  either $P_1$ or $P_2$, we get, say, a $(C_H-P)$-$\mathring{P_1}$
  path inner disjoint from $P_1$, $P_2$, and $C_H$. Denote it by $P'$
  with $u$ its end on $P_1$. If $z$ is the end of $P_1$ that is
  surrounded by $P_2$'s endpoints on $P$, then just as above we get a
  $z$-$(C_K-P)$ path $P''$ in $K$ that is inner disjoint with $C_K$.
  Without loss of generality, assume that $P_2$ is a path from $x$ to
  $y$; $(P'uP_1zP'',PxP_2yP)$ is a crossing of $C$ in $G+e$.
\end{proof}

We finally prove the direct implication of the two paths theorem, which is
the difficult one in the literature.

\begin{lemma}
  \label{lem:maximallycrossless:web}
  Let $G$ be a graph with a cycle $C$. If $G$ is maximally
  $C$-crossless, then $G$ is a web with frame $C$.
\end{lemma}
\begin{proof}
  Let $G$ be a maximally $C$-crossless graph for a cycle $C$. Edges of
  $C$ cannot be used in crossings, so that $C$ must be a subgraph of
  $G$ by maximality. We proceed by induction on $|G|$ and we perform a
  case analysis on whether all pairs of distinct vertices of $C$ are
  $C$-connected or not.

  \medskip\noindent
  \emph{1. There are two vertices $x$ and $x'$ of $C$ which are not $C$-connected}
  \medskip

  Denote by $y$ and $z$ the two neighbours of $x$ on $C$. Trivially,
  $x,y,z$ are all $C$-connected to $x$. As $x$ is not $C$-connected to
  $x'$, we can consider $P'$ the longest portion of $C$ containing
  $x'$ whose inner vertices are all not $C$-connected to $x$. Let $P$
  be the path of $C$ such that $C=PP'$ and call $w,w'$ the shared
  endpoints of $P$ and $P'$. We are going to prove that $G$ is the
  edge composition of two webs along $ww'$; this situation is summarised in
  Figure~\ref{fig:first_3_1}.
  \begin{figure}
    \centering
    \includegraphics{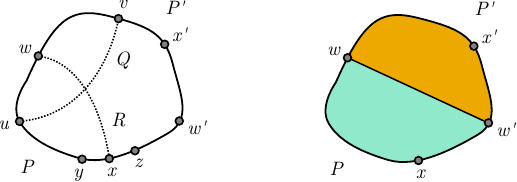}
    \caption{First case in the proof of Lemma~\ref{lem:maximallycrossless:web}; $x$ and $x'$ are not $C$-connected, $P$ and $P'$ both end at $w$ and $w'$.
      We eventually obtain the edge composition on the right.}
    \label{fig:first_3_1}
  \end{figure}

  To this end, we first prove that that $\{w,w'\}$ separates $C$. More
  precisely, that none of the inner vertices $u$ and $v$, respectively
  of $P$ and $P'$, are $C$-connected.

  If $u=x$ then inner vertices of $P'$ are not $C$-connected to $x$ by
  definition. Thus we assume $u\neq x$. For the sake of contradiction,
  let $Q$ be a $C$-path from $u$ to $v$. Let $w$ be the endpoint of
  $P$ such that $u$ is inner in $wPx$, and let $w'$ be the other end
  of $P$. By minimality of $P$, there exists an $x$-$w$ path $R$ inner
  disjoint from $C$. If $Q$ and $R$ were not disjoint, then they would
  induce an $x$-$v$ path inner disjoint from $C$, a
  contradiction. Otherwise, $(Q,R)$ is a crossing of $C$ in $G$, a
  contradiction.

  Thus $\{w,w'\}$ separates $C$. In particular adding the edge $ww'$
  cannot create a crossing of $C$ in $G$, so that by maximality, this
  edge must belong to $G$. Hence, $G$ is the edge composition of two
  graphs along $ww'$, respectively containing $P$ and $P'$. Quite
  directly, these two graphs are maximally crossless for $Pww'$ and
  $P'ww'$, respectively. By induction hypothesis they are webs and, by
  Proposition~\ref{prop:web_stability}, $G$ is a web.

  \medskip
  \noindent
  \emph{2. All pairs of distinct vertices of $C$ are $C$-connected}
  \medskip

  Take consecutive vertices $x,y,z$ on $C$ with an $x$-$z$ path $P$
  inner disjoint from $C$. For such a path $P$ we write
  $\mathcal{C}_C^P$ for the component of $C-\{x,y,z\}$ in $G-P$. We assume
  that $P$ is minimal with regard to the measure
  \[
    \mu(P)=(|G|-|\mathcal{C}_C^P|,|P|)
  \]
  where the order on $\mathbb{N}\times\mathbb{N}$ is the lexicographic
  product of the usual order $<$. Said otherwise we choose $P$ to
  maximise the order of the component of $G-P$ containing $C-\{x,y,z\}$
  while minimising the length of $P$. In particular, $P$ is induced.

  Take $u$ an inner vertex of $P$. We prove that $u$ is
  $C\cup P$-connected to $C-\{x,y,z\}$. For the sake of contradiction,
  assume not. We build a crossing of $C$ in $G$.

  Let $v,w\in P$ be such that $vPw$ is the longest subpath of $P$
  containing $u$ as an inner vertex and whose inner vertices are all
  not $C\cup P$-connected to $C-\{x,y,z\}$. As $P$ is induced and as
  $vPw$ contains at least $u$ as an inner vertex, $G$ does not
  contain~$vw$.

  Consider $G+vw$. By maximality of $G$, $C$ is crossed in $G+vw$. Let
  $(P_1,P_2)$ be a crossing of $C$ in $G+vw$, with, say,
  $vw\in E(P_1)$. The path $P_2$ must contain some inner vertex of
  $vPw$ as otherwise $(P_2,P_1vPwP_1)$ would be a crossing of $C$ in
  $G$.

  One end of $P_2$ at least must be on $C-y$. Without loss of
  generality, assume this is the case of its last vertex. Let $s$ be
  the last vertex in $vPw$ seen by $P_2$. As the last vertex of $P_2$
  is on $C-y$, and since $s$ is not $C\cup P$-connected to
  $C-\{x,y,z\}$, $sP_2$ must either intersect $P$ at one of its inner
  vertex $t$, or its last vertex is $x$ or $z$. In any case, there is
  a vertex $t$ of $sP_2$ appearing on $P$, after $s$. This situation
  is depicted on the left of Figure~\ref{fig:maincase}.
  \begin{figure}[t]
    \centering
    \includegraphics{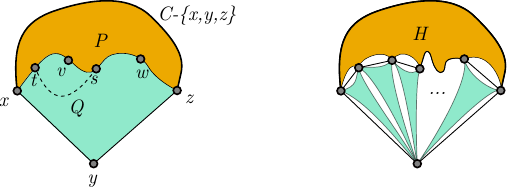}
    \caption{Second in the proof of Lemma~\ref{lem:maximallycrossless:web}, and shape eventually obtained.}
    \label{fig:maincase}
  \end{figure}
  
  By maximality of $s$ it must be outside of
  $vPw$, providing an $s$-$t$ path $Q$ such that either $v\in sPt$ or
  $w\in sPt$. But then, $P$ can be replaced by $PsQtP$ giving a path
  with lower measure $\mu$ as $\mathcal{C}_C^{PsQtP}$ contains $v$ or
  $w$ on top of all vertices of $\mathcal{C}_C^P$. This is a
  contradiction.

  The fact that all inner vertices of $P$ are $C\cup P$-connected to
  $C-\{x,y,z\}$ has the following implication: components of $G-P-y$
  that are incident to $y$ are incident to at most two vertices of $P$
  and those are consecutive in $P$. Indeed if such a component could
  join two non-consecutive vertices $s$ and $t$ of $P$ in a
  $C\cup P$-path $Q$ not intersecting $\mathcal{C}_C^P$, then $PsQtP$
  would give a better path than $P$ with regard to the measure $\mu$.
  Said otherwise, we have proved that $G$ has the shape depicted on
  the right of Figure~\ref{fig:maincase}.
  
  In other words, $G$ is the path composition along $P$ of an edge
  subgraph of a web (made of the green triangular shapes and whose rib
  is a fan graph) with frame $zyxP$ with a graph $H$ (the orange area
  above the triangular shapes) containing $PzCx$ as a cycle. It is
  easily proved that $PzCx$ is crossed in any edge extension of $H$,
  hence $H$ is a web with frame $PzCx$ by induction. By
  Proposition~\ref{prop:web_stability}, since $G$ is the path
  composition along an induced path of $H$ and an edge subgraph of a
  web, $G$ is an edge subgraph of a web with frame $C$. By maximality
  of $G$ and Lemma~\ref{lem:web_are_edge_maximal}, $G$ is a web with
  frame $C$.
\end{proof}

Combining Lemmas~\ref{lem:web_are_edge_maximal} and~\ref{lem:maximallycrossless:web} with
Proposition~\ref{prop:frame_crossless}, 
Theorem~\ref{theo:gen} is proved.

\section{Two disjoint paths problem}
\label{sec:algorithm}
\mbox{}\medskip

We now turn to the algorithmic problem of finding two disjoint paths.
Let $G$ be a graph with a tuple $T$ of pairwise distinct vertices. A
\emph{web completion} is a set $F$ of edges such that $G+F$ is a web
with frame $T$.
By Theorem~\ref{theo:gen}, web completions exist if and only if $T$ is
crossless; we solve the following:

\medskip%
\noindent\textsc{Two Disjoint Paths or Web Completion}\\
\noindent\textbf{Input:} A graph with a tuple of distinct vertices.\\
\noindent\textbf{Output:} A crossing or a web completion.%
\medskip%

In the next two sections we only consider the decision version of this
problem. We discuss the actual computation of crossings and web
completions in
Section~\ref{sec:computing_crossings_and_web_completions}, and
algorithmic complexity in Section~\ref{sec:complexity}.

\subsection{Method}

Let us abstract the two main cases in the proof of
Lemma~\ref{lem:maximallycrossless:web}: given a graph which is
maximally crossless for a cycle $C$, we find an induced $C$-path $P$
which separates $C$ into two paths $P_1$ and $P_2$ corresponding to
two components $V_1$ and $V_2$ of $G-P$; up to some assumptions on
$P$, the graphs $G[V_1\cup P]$ and $G[V_2\cup P]$ are maximally
crossless and we conclude by induction.

We turn this inductive argument into a recursive algorithm for finding
crossings. The minimality assumption on the separating path in the
proof of Lemma~\ref{lem:maximallycrossless:web} (in particular with
respect to the measure $\mu$) is not suitable to
get an efficient algorithm; the difficulty is thus to find a
separating path with good enough properties.

Henceforth we assume a graph $G$ with a tuple $T=(x_1,\dots,x_k)$ of
distinct vertices. A set of edges $F$ is called \emph{safe} for $T$
if, when $T$ is crossed in $G+F$, then $T$ is crossed in $G$.
If $T$ is crossed then all sets of edges are safe; moreover the edges
$x_ix_{i+1}$ ($1\le i\le k-1$) and $x_kx_1$ are all safe for $T$ since
they cannot be used in a crossing. Hence we assume that $T$ induces a
cycle $C$ in $G$.

\clearpage
We use the following divide-and-conquer strategy:
\begin{itemize}
\item if all $C$-paths of $G$ are between consecutive vertices of $C$
  then $C$ is crossless;
\item if there is a $C$-path $P$ between non-consecutive vertices of
  $C$ such that $C-P$ is connected in $G-P$, then $C$ is crossed;
\item otherwise, use a separating $C$-path between non-consecutive
  vertices of $C$ to find two graphs $G_1$ and $G_2$ with cycles
  $C_1$ and $C_2$ such that:
  \begin{itemize}
  \item $G$ is the path composition of $G_1$ and $G_2$ along $P$, and
  \item $C$ is crossed in $G$ if and only if $C_1$ is crossed in $G_1$
    or $C_2$ is crossed in~$G_2$.
  \end{itemize}
  Solve the corresponding subinstances recursively, and answer accordingly.\medskip
\end{itemize}
To obtain the two subinstances, we first prove the following proposition.
\begin{proposition}
  \label{prop:subinstances:prelim}
  Let $P$ be a $C$-path of $G$ between non-consecutive vertices of
  $C$, and let $P_1$ and $P_2$ be the two distinct subpaths of $C$ sharing
  their endpoints with $P$ and such that $C=P_1\overline{P_2}$.

  If $P$ separates $\mathring{P_1}$ from $\mathring{P_2}$ then $P_1$
  and $P_2$ are contained in subgraphs $G_1$ and $G_2$ of $G$,
  respectively, and $G$ is the path composition of $G_1$ and $G_2$
  along $P$.
\end{proposition}
\begin{proof}
  Observe that $\mathring{P_1}$ and $\mathring{P_2}$ are both
  non-empty as the endpoints of $P$ are non-consecutive on $C$.
  The statement follows by studying components of $G-P$. Any such
  component containing vertices of $\mathring{P_i}$ is associated to
  $G_i$. All other components as well as edges between non-consecutive
  vertices of $P$ are arbitrarily distributed between $G_1$ and $G_2$.
  The graph $G_i$ is defined as the subgraph of $G$ induced by the
  union of $P$ with the components of $G-P$ associated with $G_i$ to
  which any edge between non-consecutive vertices of $P$ it is not
  associated with is removed. The fact that $G$ is indeed the path
  composition of $G_1$ and $G_2$ along $P$ follows easily from $P$
  being a separator of $\mathring{P_1}$ and $\mathring{P_2}$.
\end{proof}

Following the divide-and-conquer strategy described above, the pairs
$(G_i,C_i)$ (with $C_i=P_iP$) are regarded as subinstances.
Unfortunately, in general it is false that $C_1$ or $C_2$ is crossed
(resp. in $G_1$ or $G_2$) if and only if $C$ is crossed in $G$.

The left-hand side example of Figure~\ref{fig:wrong_sub_instances}
shows a graph $G$ given as the path composition of two graphs $G_1$
and $G_2$, each corresponding to a coloured area of the drawing (two
edges do cross so this is indeed a drawing and not a plane embedding).
The path $P$ is the one in the intersection of the two graphs. The
cycle $C$ is the outer $4$-cycle. The cycle $C_1$ is crossed in $G_1$
while $C$ is crossless in $G$. On the right-hand side of
Figure~\ref{fig:wrong_sub_instances} we have displayed the only web
completion of $G$ (see how it can be built by adding one clique of
order two on some triangle of the only rib of order five with a frame
of length four in Figure~\ref{fig:web_examples}). It cannot be
expressed as the path composition of two graphs along $P$: the
subinstances induced by $P$ in $G$ do not behave well with regard to
web completions.
\begin{figure}
  \centering
  \includegraphics[width=.6\linewidth]{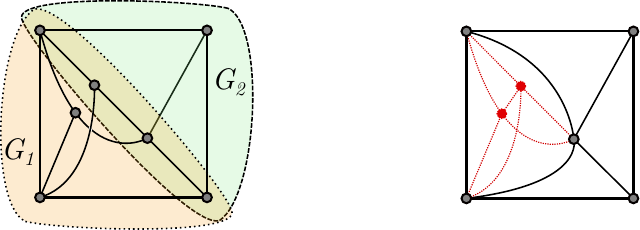}
  \caption{A graph and its only web completion (with respect to the
    outer $4$-cycle).}
  \label{fig:wrong_sub_instances}
\end{figure}

Our solution consists in ensuring a connectivity property on $P$.
Given $i\in\{1,2\}$, we say that a vertex $y\in P$ \emph{is on
  $P$-side $i$} when $y$ and $\mathring{P_i}$ are $P$-connected.
We refine the previous proposition as follows:
\begin{proposition}
  \label{prop:subinstances}
  Let $P$ be a $C$-path of $G$ between non-consecutive vertices of
  $C$, and let $P_1$ and $P_2$ be the two distinct subpaths of $C$ sharing
  their endpoints with $P$ such that $C=P_1\overline{P_2}$.

  If $P$ separates $\mathring P_1$ from $\mathring P_2$ and all vertices of $P$ are on
  both $P$-sides, then $C$ is crossed in $G$ if and only if $P_1P$ is
  crossed in $G_1$ or $P_2P$ is crossed in $G_2$, where $G_1$ and $G_2$
  are the subgraphs obtained by
  Proposition~\ref{prop:subinstances:prelim}.
\end{proposition}
\begin{proof}
  Suppose that $\mathring P_1$ and $\mathring P_2$ are separated by $P$, and that all
  vertices of $P$ lie on both $P$-sides. We prove the contrapositives
  of both implications.
  \begin{itemize}
  \item Assume that $P_iP$ is crossless in $G_i$ for every
    $i\in\{1,2\}$. By Theorem~\ref{theo:gen}, $G_i$ has a web
    completion $F_i$.

    In $G_i+F_i$, the path $P$ must be induced. Indeed, if $xy$ were
    an edge of $G_i+F_i$ between non-consecutive vertices of $P$, then
    we could consider $z$ an inner vertex of $xPy$. As $z$ is on both
    $P$-sides, there is in $G_i$ a $\mathring{P_i}$-$z$ path $Q$. The
    pair $(Q,xy)$ is a crossing of $P_iP$ in the web $G_i+F_i$, a
    contradiction.

    By Proposition~\ref{prop:web_stability} the web
    composition of $G_1+F_1$ and $G_2+F_2$ along $P$ is a web,
    proving that $G$ has a web completion in the form of
    $F_1\cup F_2$, and as a consequence of Theorem~\ref{theo:gen}
    that $C$ is crossless.
  \item Suppose $C$ is crossless in $G$, and consider a component $D$
    of $G-P$ containing vertices of neither $P_1$ nor $P_2$. It is
    incident to at most two vertices of $P$ and these must be
    consecutive. Indeed, if that was not the case, then $D$ would
    induce an $x$-$y$ path $Q$ between non-consecutive vertices of
    $P$. Let $z$ be an inner vertex of $xPy$. As $z$ is on both
    $P$-sides, there are $\mathring{P_1}$ -$z$ ad $z$-$\mathring{P_2}$
    paths $R$ and $S$ inner disjoint with $C$, $P$, and $Q$. The pair
    $(PxQyP,RS)$ is a crossing of $C$ in $G$, a contradiction.

    The same method works when $D$ is replaced with an edge between
    non-consecutive vertices of $P$.
    In particular, $P$ is induced in $G$.

    Let $\mathcal{D}$ be the set of all components of $G-P$ containing
    vertices of neither $P_1$ nor $P_2$. $C$ is crossless in
    $G-\mathcal{D}$, so by Theorem~\ref{theo:gen} we let $F$ be an
    associated web completion. As edges between non-consecutive
    vertices of $P$ make $C$ be crossed, $P$ remains induced in
    $(G-\mathcal{D})+F$. Hence $P$'s vertices are all in the rib of
    $(G-\mathcal{D})+F$. This gives webs $H_i$ with frames $P_iP$ such
    that $(G-\mathcal{D})+F$ is their web composition along $P$.
    Obviously $H_i$ is a web containing $G_i-\mathcal{D}$. For each
    component $D$ of $\mathcal{D}$ incident to, say, an edge $xy$ of
    $P$, if its vertices are in $G_i$ then we make them all into a
    clique, and adjacent to all vertices of the triangle of $H_i$'s
    rib containing $xy$ and of the clique of this triangle.

    This construction provides a web completion of both $G_1$ and
    $G_2$ proving that $P_1P$ and $P_2P$ are crossless, respectively in
    $G_1$ and $G_2$.
  \end{itemize}
\end{proof}

The example of Figure~\ref{fig:wrong_sub_instances} however provides a
graph where no $C$-path $P$ may be used to apply this proposition and
get two subinstances. We prove that we can always add edges so
that we can eventually apply this proposition.

Given a $C$-path $P$ of $G$ we call \emph{$P$-completion} a safe set
$F$ of edges such that $P+F$ contains a $C$-path $P'$ whose vertices
are all on both $P'$-sides in $G+F$.

\begin{proposition}
  \label{prop:P_completion_naive}
  Let $G$ be a graph with a cycle $C$. For every $C$-path $P$ between
  non-consecutive vertices of $C$, there exists a $P$-completion.
\end{proposition}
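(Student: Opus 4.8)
The plan is to split on whether $C$ is crossed in $G$, using the consequence of Theorem~\ref{theo:main_2} that either $C$ is crossed or $G$ is an edge-subgraph of a web $W$ with frame $C$. Write $a,b$ for the (non-consecutive) endpoints of $P$ and $\mathring{P_1},\mathring{P_2}$ for the interiors of the two arcs of $C$ from $a$ to $b$; both are non-empty. Two preliminary observations drive the argument. First, $a$ and $b$ are on both $P'$-sides for any sub-path $P'$ of $P$: their neighbours along the two arcs lie in $\mathring{P_1}$ and $\mathring{P_2}$, are joined to them by an edge of $C$, and do not belong to $V(P)$. Second, there is a \emph{monotonicity}: if $P'$ is obtained from $P$ by deleting vertices, a vertex of $P'$ that is on a given $P$-side is a fortiori on the corresponding $P'$-side, since a witnessing path avoiding $V(P)$ also avoids $V(P')$.

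If $C$ is crossed, every set of edges is safe, and I would let $P'$ be the sub-path of $P$ supported on the vertices that are on both $P$-sides, taking $F$ to be the chords joining consecutive such vertices. By the first observation $a$ and $b$ are kept, so $P'$ is a genuine $a$-$b$ path, and by monotonicity all its vertices are on both $P'$-sides in $G\subseteq G+F$; safety is automatic.

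The substantial case is when $C$ is crossless. Here I would invoke Theorem~\ref{theo:main_2} to obtain a web completion $F_0$, so that $W:=G+F_0$ is a web with frame $C$; since $C$ is crossless in $W$ by Lemma~\ref{lem:frame_crossless}, any edge set contained in $E(W)$ is safe. I would then straighten $P$ inside $W$: using the opening observation of Section~\ref{sec:webs}, each maximal subpath of $P$ through clique vertices runs between two rib vertices of a common triangle and can be replaced by the corresponding rib edge, yielding an induced $a$-$b$ path $P'$ lying in the rib $R$ of $W$, with $V(P')\subseteq V(P)$ and $E(P')\subseteq E(W)$. Setting $F:=F_0\cup(E(P')\setminus E(P))$ keeps $G+F=W$, hence keeps $F$ safe and guarantees $P'\subseteq P+F$. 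As the $P'$-sides are now computed in $G+F=W$, it remains to verify the connectivity claim purely inside the web.

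That claim is the heart of the proof: in a web $W$ with frame $C$, every interior vertex of an induced rib-path $P'$ between two non-consecutive frame vertices reaches both $\mathring{P_1}$ and $\mathring{P_2}$ along paths avoiding $V(P')$. I would argue this from planarity of $R$: an induced $a$-$b$ path separates the near-triangulation $R$ into two pieces bounded respectively by $P'$ and each arc, each piece being a connected near-triangulation containing the corresponding $\mathring{P_i}$. For an interior vertex $y$ of $P'$ with $P'$-neighbours $y^-,y^+$, the two edges $yy^-,yy^+$ split the rotation at $y$ into two angular sectors; were a sector a single triangular face it would force the edge $y^-y^+$, contradicting that $P'$ is induced, so each sector contributes a neighbour strictly inside the corresponding piece, from which $\mathring{P_i}$ is reached without re-entering $V(P')$. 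The main obstacle I anticipate is turning this planar intuition into a clean statement that the reached side is connected and meets $\mathring{P_i}$; I expect to obtain it from the path-versus-subpath version of Lemma~\ref{lem:connectivity_lemma_webs} recorded in Remark~\ref{rem:connectivity_lemma_webs}, applied to $y$ and the arc inside the relevant piece, with induced-ness of $P'$ excluding the separating-edge alternative. A minor remaining check is that $E(P')\subseteq E(P)\cup F$, which holds by the very definition of $F$.
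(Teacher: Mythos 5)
Your proposal is correct and follows essentially the same route as the paper's own proof: split on whether $C$ is crossed (safety is then automatic; the paper simply adds the single chord $y_1y_l$ where you add chords along the both-sided subpath of $P$), and otherwise take a web completion $F$ from Theorem~\ref{theo:main_2} and extract an induced $C$-path $P'$ contained in $P+F$ inside the resulting web. The only difference is one of detail: the paper leaves implicit the verification that all vertices of such a $P'$ are on both $P'$-sides, which you carry out correctly by splitting the web along $P'$ as in Proposition~\ref{prop:generating_webs_parallel} and invoking Remark~\ref{rem:connectivity_lemma_webs} with induced-ness excluding the edge-composition alternative---the same tools the paper uses for the analogous step in the proof of Lemma~\ref{lem:web_are_edge_maximal}.
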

\begin{proof}
  Assume $P=y_1\dots y_l$. If $C$ is crossed in $G$ then $y_1y_l$ is a
  $P$-completion. If instead $C$ is crossless, then by
  Theorem~\ref{theo:gen} there is a set $F$ of edges such that $G+F$
  is a web. Any $C$-path $P'$ of $G+F$ which is induced and contained
  in $P+F$ can then be used to prove that $F$ is a $P$-completion (see
  Claim~\ref{claim:connected} for details on proving that vertices of
  $P'$ lie on both $P'$-sides).
\end{proof}

Equipped with these tools, we introduce our general method in
Algorithm~\ref{algo:general_method}. We assume that a call to
$\mathbf{P\_completion}(G,C,P)$ computes a $P$-completion $F$ with
associated $C$-path $P'$, and returns the pair $(G+F,P')$. By
Proposition~\ref{prop:subinstances} and induction on the size of the
recursive instances, the method is both correct and terminating.

It remains to implement $\mathbf{P\_completion}$. A naive computation
of $P$-completions, along the lines of the proof of
Proposition~\ref{prop:P_completion_naive}, amounts to solving the two
disjoint paths problem. Our method is only of interest if
$P$-completions can be computed efficiently, which is the topic of the
next section.
\begin{algorithm}[t]
  \caption{General method for the two disjoint paths decision problem.}
  \label{algo:general_method}
  \begin{algorithmic}[1]
    \Procedure{\bf{2DP}}{$G,C$}\Comment{decides if $C$ is crossed in $G$ or not}
    \If{there are no $C$-paths between non-consecutive vertices of $C$}
    \State \textbf{return} False
    \Else
    \State $P\gets$ a $C$-path between non-consecutive vertices of $C$
    \If{$P$ does not separate $C$ in $G$}
    \State \textbf{return} True
    \Else
    \State $G,P\gets\mathbf{P\_completion}(G,C,P)$
    \State $G_i,C_i\gets$ two subinstances induced by $G$, $P$ and $C$
    \State \textbf{return} \textbf{(2DP($G_1$,$C_1$)} $\vee$ \textbf{2DP($G_2$,$C_2$))}
    \EndIf
    \EndIf
    \EndProcedure
  \end{algorithmic}
\end{algorithm}

\subsection{Computing \emph{P}-completions efficiently} Recall that we are
working with a graph $G$ with a cycle $C$ and a $C$-path $P$
between non-consecutive vertices of $C$. The goal is to compute a
$P$-completion efficiently. As before we let $P_1,P_2$ be the two
distinct subpaths of $C$ with the same endpoints as $P$ and such that
$C=P_1\overline{P_2}$.

The core idea behind computing a $P$-completion is to analyse which
$P$-sides each vertex of $P$ lies on. This reveals some gaps as in the
left-hand side graph shape of Figure~\ref{fig:intuition_P_completion},
where the outer cycle is $C$, and the central path with explicit
vertices is $P$. At first sight, the red dotted edges form a $P$-completion.
\begin{figure}[t]
  \centering
  \includegraphics{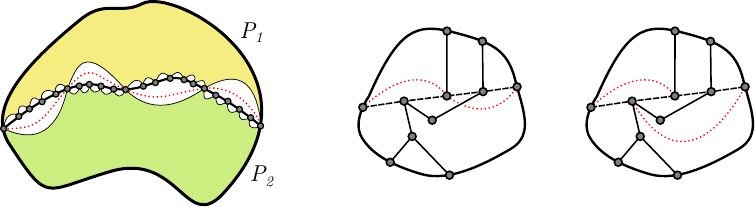}
  \caption{Intuition on how $P$-completions are computed.}
  \label{fig:intuition_P_completion}
\end{figure}
The issue is that the red dotted edges may not be well-defined,
because there might be components of $G-P$ intersecting neither
$\mathring{P_1}$ nor $\mathring{P_2}$, and also because the gaps may
not intersect on vertices which lie on both $P$-sides. The two other
graphs of Figure~\ref{fig:intuition_P_completion} depict two possible
choices of red dotted edges in a concrete graph (where $P$ is the
dashed straight path). In the first one, the right-most red dotted
edge is not safe: it makes the outer cycle crossed. In the second one,
the two red dotted edges cannot both be used in a $C$-path. Our
solution, described in Lemma~\ref{lem:computing_P_completions} below,
is to add red edges one after the other, recomputing sides at each
step. On the previous example, we will obtain that adding one red edge
from the last graph (but not both) is enough to get a $P$-completion.

Write $y_1\dots y_l$ for $P$. Since $P$ connects non-consecutive
vertices of $C$, both $\mathring{P_1}$ and $\mathring{P_2}$ are
nonempty. Moreover, as $y_1$ (resp. $y_l$) is adjacent to a vertex of
both $\mathring{P_1}$ and $\mathring{P_2}$, $y_1$ (resp. $y_l$) is on
both $P$-sides. Hence the following are well-defined: 1. the largest
prefix $Py$ of $P$ whose vertices are all on both $P$-sides; 2. when
$y\neq y_l$, the first vertex $z$ on $P$ found after $y$ which is on a
given $P$-side.

The cornerstone for computing $P$-completions efficiently is the
following lemma. Its proof is involved although the resulting
algorithm is quite simple in the end.
\begin{lemma}
  \label{lem:computing_P_completions}
  Assume that not all vertices of $P$ lie on both $P$-sides.

  Let $j<l$ be the greatest index such that all vertices of $Py_j$ are
  on both $P$-sides.
  
  Let $r>j$ be the lowest index such that $y_r$ is on at least one
  $P$-side.

  Let $s\ge r$ be either:
  \begin{itemize}
  \item $s=r$ if $y_r$ is on both $P$-sides, or
  \item the lowest index such that $y_s$ lies on the opposite $P$-side of $y_r$.
  \end{itemize}
  Then the edge $y_jy_s$ is safe, and in the graph $G+y_jy_s$, all
  vertices of $Py_jy_s$ lie on both $Py_jy_sP$-sides.
\end{lemma}
\begin{proof}
  To prove that all vertices of $Py_jy_s$ lie on both $Py_jy_sP$-sides
  in $G+y_jy_s$ we only need proving that $y_s$ is on both
  $Py_jy_sP$-sides. The remainder follows from the assumption that all
  vertices of $Py_j$ lie on both $P$-sides in $G$.

  Without loss of generality, we assume that $y_r$ is on $P$-side $1$
  in $G$. The vertex $y_s$ is on $P$-side $2$ (we may have $r=s$).
  These two facts provide a $\mathring{P_1}$-$y_r$ path $Q$ inner
  disjoint with $P\cup C$ and a $\mathring{P_2}$-$y_s$ path $R$ inner
  disjoint with $P\cup C$. The paths $R$ and $Qy_rPy_s$ prove that
  $y_s$ is on both $Py_jy_sP$-sides in $G+y_jy_s$.

  \medskip

  We now prove that $y_jy_s$ is safe.
  If $C$ is crossed in $G$ then it is so in $G+y_jy_s$.
  Otherwise, $C$ is crossless in $G$ and we prove that it is crossless
  in $G+y_jy_s$ too.
  By Theorem~\ref{theo:gen}, $G$ has a web completion $F$. Below,
  we modify $F$ to get a $P$-completion that contains $y_jy_s$,
  proving this edge is safe for $C$. In $G+F$, we let $P'$ be an induced
  $C$-path extracted from $P+F$. We first prove four claims
  relating $P'$ with $P$:

  \medskip
  \begin{minipage}{.9\linewidth}
    \begin{claim}
      \label{claim:yjyr_is_safe}
      The edge $y_jy_r$ is safe in $G$.
    \end{claim}
    \emph{Proof.} For the sake of contradiction, let $(Q,R)$ be
    a crossing of $C$ in $G+y_jy_r$.

    Since $C$ is crossless in $G$, $(Q,R)$ is not a crossing of $C$ in
    $G$ and (without loss of generality) $Q$ contains the edge
    $y_jy_r$. As vertices $y_t$ with $j<t<r$ have no $P$-sides,
    $\{y_j,y_r\}$ separates them from $C$. Hence, since $R$ has its
    endpoints on $C$, $R$ does not intersect $y_jPy_r$. The pair
    $(Qy_jPy_rQ,R)$ is a crossing of $C$ in $G$, a
    contradiction.\hfill$\lhd$
  \end{minipage}

  \medskip  
  \begin{minipage}{.9\linewidth}
    \begin{claim}
      \label{claim:side_preservation}
      If some vertex $z$ has $P$-side $i$ in $G$ then it has $P'$-side
      $i$ in $G+F$.
    \end{claim}
    \emph{Proof.} Indeed, as $V(P')\subseteq V(P)$, a
    $C\cup P$-path from $z$ to $\mathring{P_i}$ in $G$ is a
    $C\cup P'$-path from $z$ to $\mathring{P_i}$ in $G+F$.\hfill$\lhd$
  \end{minipage}
  
  \medskip
  \begin{minipage}{.9\linewidth}
    \begin{claim}
      \label{claim:prefix}
      All vertices $y_t$ with $1\le t\le j$ are on $P'$. In particular,
      the prefix of $P'$ of length $j-1$ is $Py_j$.
    \end{claim}
    \emph{Proof.} For the sake of contradiction, let $y_t$ with
    $1\le t\le j$ be a vertex which is not on $P'$. Since $y_t$ is on
    both $P$-sides, let $Q_1$ and $Q_2$ be $P\cup C$-paths in $G$,
    respectively from $\mathring{P_1}$ to $y_t$ and from $y_t$ to
    $\mathring{P_2}$. The pair $(Q_1Q_2,P')$ is a crossing of $C$ in
    $G+F$, a contradiction.

    We prove that the prefix of $P'$ of length $j-1$ is $Py_j$.

    For the sake of contradiction, assume that the prefix of $P'$ of
    length $j-1$ is not $Py_j$ and let $y_a$ be the first vertex on $P'$
    such that $P'=Py_ay_{a'}P'$ with $a'\neq a+1$. As $y_{a+1}$ is on
    both $P$-sides, we can consider $P\cup C$-paths $Q$ and $R$
    respectively from $\mathring{P_1}$ to $y_{a+1}$ and from $y_{a+1}$
    to $\mathring{P_2}$. The pair $(Q_1Q_2,Py_ay_{a'}P)$ is a crossing
    of $C$ in $G+F$, a contradiction.\hfill$\lhd$
  \end{minipage}
  
  \medskip
  \begin{minipage}{.9\linewidth}
    \begin{claim}
      \label{claim:j_plus_1}
      The $j+1$th vertex of $P'$ is a vertex $y_t$ with $j<t\le s$.
    \end{claim}
    \emph{Proof.} For the sake of contradiction, assume not. By
    Claim~\ref{claim:prefix}, the $j+1$th vertex of $P'$ is $y_t$ for
    $t>s$. Since $y_r$ and $y_s$ are respectively on, say, $P$-sides $1$
    and $2$, we can consider $P\cup C$-paths $Q$ and $R$ respectively
    from $\mathring{P_1}$ to $y_r$ and from $\mathring{P_2}$ to $y_s$.
    The pair $(Qy_rPy_sR,Py_jy_tP)$ is then a crossing of $C$ in $G+F$,
    a contradiction.\hfill$\lhd$
  \end{minipage}
  \bigskip

  We are now back to proving that $y_jy_s$ is safe. By
  Claim~\ref{claim:yjyr_is_safe} we can assume $r\neq s$, implying, by
  definition of $r$ and $s$, that $r$ is on exactly one $P$-side in
  $G$. Without loss of generality, assume $y_r$ is on $P$-side $1$ but
  not $2$ in $G$. This implies that $y_s$ is on $P$-side $2$ in $G$
  and is the only such vertex among $y_{j+1},\dots,y_s$.
  Claims~\ref{claim:prefix} and~\ref{claim:j_plus_1} prove that
  $P'=Py_jy_ty_{t'}P'$ with $j<t\le s$.

  We prove the safeness of $y_jy_s$ by induction on the length of
  $P'$. If $s=t$ then obviously $F$ contains $y_jy_s$ and this edge is
  safe. Assume $s-t>0$. It implies that $y_t$ is not on $P$-side $2$
  in $G$.

  The idea of the proof is to use the fact that none of the vertices
  in $y_{j+1},\dots,y_t$ lie on $P$-side $2$ in $G$ to ``cut'' some
  edges of $F$ between these vertices and $P$-side $2$, and replace
  them by an edge closer to $y_jy_s$:
  \smallskip
  \begin{center}
    \includegraphics[width=.65\linewidth]{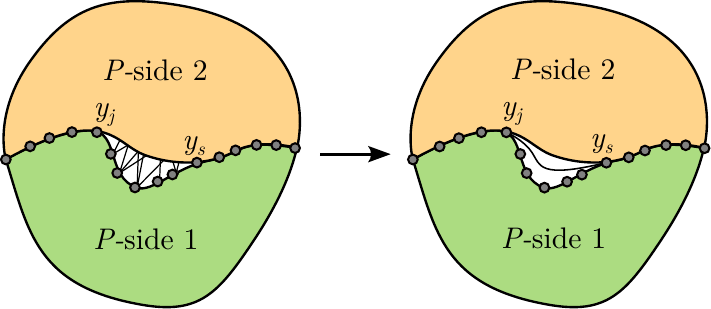}
  \end{center}
  
  \smallskip\noindent Assuming this illustration takes the planar
  embedding of the rib of $G$ into account, adding $y_jy_s$ is safe
  after the removal of the edges drawn in the middle of the graph on
  the left hand-side.

  This illustration works when $P'=P$ and the vertices in
  $y_{j+1},\dots,y_{s-1}$ are directly connected to vertices on
  $P$-side $2$ in $G$ (and hence the edges we remove are all directly
  incident to vertices in $y_{j+1},\dots,y_{s-1}$). Neither one of
  these assumptions hold in general. Instead we prove that $y_jz$ is
  safe, for $z$ a vertex appearing after $y_t$ on $P'$, allowing us to
  apply the induction hypothesis on $P'y_jzP'$.

  By Proposition~\ref{prop:wheel_fan}, in $G+F$, $y_t$ and neighbours
  of $y_t$ that are on $P'$-side $2$ and on the rib of $G+F$ describe
  a fan graph:

  \smallskip
  \begin{center}
    \includegraphics[width=.3\linewidth]{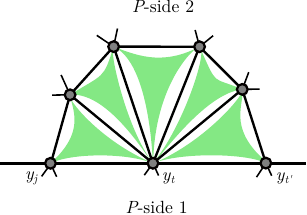}
  \end{center}
  
  \smallskip\noindent In this figure, the green triangles correspond
  to the clique vertices of the web $G+F$. The path of this fan graph
  avoiding $y_t$ is a $y_j$-$y_{t'}$ path $Q$. In what follows we
  assume we have a plane drawing of $G+F$ which induces a plane
  embedding of its rib, and such that clique vertices and associated
  edges of a given triangle of the rib are restricted to the area of
  the plane bounded by the said triangle.

  Let $z$ be an inner vertex of $Q$ on $P$-side $2$ in $G$. We prove
  $zy_t\in F\setminus E(G)$.

  Write $u$ and $v$ for the predecessor and successor of $z$ on $Q$,
  respectively. The edge $y_tz$ is in $F\setminus E(G)$ ($zy_t$ is an
  edge we want to remove, as explained above), since, together with a
  $P$-path from $z$ to $\mathring{P_2}$, having $zy_t\in E(G)$ proves
  that $y_t$ is on $P$-side $2$, which is a contradiction in $G$.

  We describe a process modifying $F$ to remove $z$ from $y_t$'s
  neighbours. Remove $y_tz$ from $F$ as well as the edges between
  clique vertices of the triangles $uzy_t$ and $zy_tv$ of $G+F$'s rib
  connecting a vertex on no $P$-side and one on $P$-side $2$ in $G$. Add
  the edge $uv$ and fill back the cliques in a compatible way with
  respect to the new triangles $uzv$ and $uy_tv$. The whole process is
  represented as follows:
  
  \smallskip
  \begin{center}
    \includegraphics{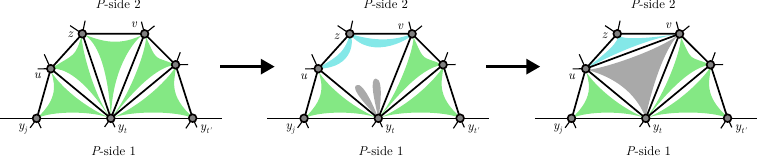}
  \end{center}

  \smallskip\noindent
  In this figure, the grey loops are the clique vertices on no
  $P$-sides in $G$, and the blue thick arcs the ones that have
  $P$-side $2$ in $G$.

  The graph obtained at the end of this process is a web. Its rib is
  induced by the same vertices as the rib of $G+F$. The only possible
  issue would be that $uv$ is already an edge in $G+F$. That cannot
  be as then in the drawing of $G+F$ fixed above, $uvy_t$ would have
  been a triangle of the rib while not being an inner face.

  We distinguish several cases depending on which $P$-sides in $G$ the
  inner vertices of $Q$ are on. In the first case, we are able to
  remove edges in a way close to the illustration provided above, and
  hence we obtain that $y_jy_t'$ is safe. In the second case, more
  edges needs to be removed to prove that $y_jy_{t''}$ is safe for
  $t''\ge t'$.
  \begin{itemize}
  \item \textbf{All the inner vertices of $Q$ are on $P$-side $2$ in
      $G$;} after applying the process above $|Q|-2$ times, we get a
    web containing $G$ as well as the edge $y_jy_{t'}$, proving its
    safeness. In this web the path $P'y_jy_{t'}P'$ is smaller in
    length than $P'$ allowing us to conclude by induction hypothesis.

  \item \textbf{There exists an inner vertex of $Q$ on no $P$-side in $G$;} by
    applying the process above repeatedly, we can assume that all inner
    vertices of $Q$ have no $P$-side in $G$.

    Consider $H$ the maximal (with regard to vertex sets inclusion)
    connected subgraph of $G+F$ induced by the vertices that
    are $C\cup P'$-connected to $y_t$, on $P'$-side $2$ in $G+F$, as
    well as on no $P$-side in $G$. All inner vertices of $Q$ are in
    $H$.

    The first vertex of $P'$ that is adjacent to a vertex of $H$ is
    $y_j$. Indeed, if it was a vertex $y_a$ with $a<j$, then we could
    consider $P$-paths $Q_1$ and $Q_2$ in $G$ from $\mathring{P_1}$ to
    $y_j$ and from $y_j$ to $\mathring{P_2}$ respectively and $R$ a
    $y_a$-$y_t$ path in $H$, and $(Q_1Q_2,Py_aRy_tP)$ would be a
    crossing of $C$ in $G+F$, a contradiction.

    Let $y_{t''}$ be the last vertex of $P'$ that is adjacent to some
    vertex of $H$.

    We prove that $y_jy_{t''}$ is safe by modifying $F$ in order to
    add $y_jy_{t''}$. In $G+F$, we remove all edges between vertices of
    $H$ and vertices of $G+F$ that are not on $P'$. Those are edges
    between vertices having no $P$-side and on $P$-side $2$ in $G$:
    they are all in $F\setminus E(G)$. Next we add the edge
    $y_jy_{t''}$. In the plane drawing of $G+F$ we do that close
    enough to $H$ making it clear that the cycle $y_jP'y_{t''}y_j$
    separates vertices of $H$ from the rest of the graph:
    
    \smallskip
    \begin{center}
      \includegraphics{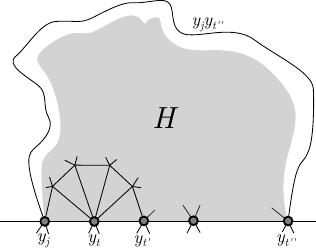}
    \end{center}

    \smallskip\noindent Let $G'$ be the obtained graph. It is clear
    that $G$ is an edge subgraph of $G'$. We prove that $C$ is
    crossless in $G'$. By contradiction, assume that $(R,S)$ is a
    crossing of $C$ in $G'$. One of the paths, say $R$, must contain
    $y_jy_{t''}$. The path $S$ must contain some vertices of
    $y_jP'y_{t''}$ as otherwise $(Ry_jP'y_{t''}R,S)$ would be a
    crossing of $C$ in $G+F$. Let $y_a$ and $y_b$ be the first and
    last inner vertices of $y_jP'y_{t''}$ appearing on $S$. As the
    cycle $y_jP'y_{t''}y_j$ separates $H$ from the rest of $G'$, and
    from $C$ in particular, and as the endpoints of $S$ are on $C$,
    the path $Sy_aP'y_bS$ does not intersect $H$. Since $H$ is connected
    we consider a $y_j$-$y_{t''}$ path $Q'$ such that $\mathring{Q'}$
    is a path in $H$. The pair $(Ry_jQ'y_{t''}R,Sy_aP'y_bS)$ is a
    crossing of $C$ in $G+F$, a contradiction.

    Take a web completion of $G'$ and replace the path $P'$ by
    $P'y_jy_{t''}P'$. The latter is shorter than $P'$, allowing us to
    conclude by induction hypothesis.
  \end{itemize}
\end{proof}

Thanks to this lemma, we can compute $P$-completions using
Algorithm~\ref{algo:P_completion}. Its termination is ensured by the
fact that, in the while loop, the length of $P$ is non-increasing
while the longest prefix $Py$ of $P$ whose vertices are all on both
$P$-sides sees its length increase at each step.
\begin{algorithm}[t]
  \caption{Computing $P$-completions.}
  \label{algo:P_completion}
  \begin{algorithmic}[1]
    \Procedure{\bf{P\_completion}}{$G,C,P$}
      \State $P_1,P_2\gets$ the two subpaths of $C$ separated by $P$ such that $C=P_1\overline{P_2}$
      \State $y\gets$ the first vertex of $P$
      \While{$y$ is not the last vertex of $P$}
        \State $z\gets$ the first vertex of $P$ strictly after $y$ that is on some $P$-side $i$
        \State $u\gets$ the first vertex of $P$ after $z$ on $P$-side $j\neq i$
        \State $G\gets G+yu$
        \State $P\gets PyuP$
        \State $y\gets$ the first vertex of $P$ after $y$ whose successor is not on both $P$-sides
      \EndWhile
      \State \textbf{return} $G,P$
    \EndProcedure
  \end{algorithmic}
\end{algorithm}

\subsection{Extracting crossings and web completions}
\label{sec:computing_crossings_and_web_completions}

In the previous two sections we have shown how to decide the
\textsc{2-Disjoint Paths or Web Completion} problem. Now, we show
how Algorithm~\ref{algo:general_method} can be refined to compute
crossings or web completions explicitly. We use the same notations
$G,C,P,P_i,G_i,C_i$ as above.

When Algorithm~\ref{algo:general_method} returns True we compute a
crossing of $C$ as follows:
\begin{itemize}
\item if the $C$-path $P$ chosen in $G$ does not separate $C$ then we
  compute a $\mathring{P_1}$-$\mathring{P_2}$ path $P'$ in $G-P$ and
  return $(P,P')$.
\item if some recursive call $\mathbf{2DP}(G_i,C_i)$ returns True, and
  hence returns a crossing $(Q,R)$ of $C_i$ in $G_i$, its shape must
  fall into one of the five cases of Figure~\ref{fig:five_cases}. The
  solutions presented there to build a crossing of $C$ in $G$
  still work here, thanks to vertices of $P$ being all on both
  $P$-sides. Returning this crossing then reduces to computing
  appropriate $C\cup P$-paths from $Q$ or $R$ to $\mathring{P_1}$
  and/or $\mathring{P_2}$.
\end{itemize}

In the case Algorithm~\ref{algo:general_method} returns False, we
compute web completions as follows:
\begin{itemize}
\item if there are no $C$-paths between non-consecutive
  vertices of $C$, then each component of $G-C$ is incident to either
  one or two vertices of $C$, and in the latter case the two vertices
  must be consecutive in $C$. We choose $x_0$ in $C$ and set
  $F=\{x_0y\mid y\in V(C),~y\neq x_0\}\cup C$. The graph $R=(V(C),F)$
  is a rib compatible with $G$. For each component of $G-C$ we choose
  a triangle of $R$ it is incident to. For each triangle $T$ of $R$ we
  turn the components of $G-C$ we associated with $T$ into a clique on
  $T$. This provides a web completion of~$G$.
\item if both calls $\mathbf{2DP}(G_i,C_i)$ returned False, and hence
  a web completion of the graph $G_i$ with respect to $C_i$ for $i\in\{1,2\}$,
  then their union is a web completion of $G$ with respect to $C$.
\end{itemize}

\subsection{Complexity} \label{sec:complexity} Let $G$ be a graph with
a cycle $C$ and a $C$-path $P$ between non-consecutive vertices of
$C$. To analyse the complexity of the whole algorithm, we need to be
more precise about:
\begin{itemize}
\item how do we compute paths,
\item how do we decide whether a path separates $C$, and
\item how do we decide $P$-connectedness and in particular which
  $P$-sides a given vertex belongs to.
\end{itemize}

All of these can be done using a \emph{search algorithm} (e.g.
depth-~or breadth-first searches) forbidding the traversal of some
vertices $X\subseteq V(G)$. We call \emph{$X$-searches} such searches.

Whereas a usual search algorithm allows one to compute the connected
components of a graph, $X$-searches allow for the computation of
components of $G-X$ as well as the subsets of $X$ these components are
incident to.

Let $n=|V(G)|$ and $m=|E(G)|$.
\begin{proposition}
  There exists an implementation of
  Algorithm~\ref{algo:general_method} that runs in $O(nm)$ time on
  connected graphs.
\end{proposition}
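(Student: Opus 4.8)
The plan is to bound separately the cost of the primitive operations, the number of recursive calls, and the total cost of the $P$-completions, then add these up. First I would fix the data structures so that every primitive used by Algorithm~\ref{algo:general_method} and Algorithm~\ref{algo:P_completion} reduces to a bounded number of $X$-searches. Searching for a $C$-path between non-consecutive vertices, deciding whether a given $C$-path $P$ separates $C$, computing the components of $G-P$ together with their incidences to $P$ (hence the subinstances $G_1,G_2$), and deciding for each vertex of $P$ which $P$-sides it lies on, are all obtained by running $X$-searches with $X=V(C)$ or $X=V(P)$ and reading off the resulting components. Each such search costs $O(n+m)$; on a connected graph $n-1\le m$, so each costs $O(m)$. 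Note also that the subinstances $G_1,G_2$ are connected, so the bound applies recursively.

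Next I would bound the number of internal nodes of the recursion tree by $O(n)$. Write $C_v$ for the cycle and $P'_v$ for the separating path actually used at an internal node $v$, i.e.\ the one returned by $\mathbf{P\_completion}$. If $|P'_v|\ge 3$ then the interior of $P'_v$ is nonempty and consists of vertices that are off $C_v$ but that become cycle vertices of \emph{both} children, since in any descendant a separating path meets its cycle only at its two ends. Hence such a vertex is never again interior to a separating path, so each vertex is interior to at most one $P'_v$ over the whole tree; this bounds the number of nodes with $|P'_v|\ge 3$ by $n$ and gives $\sum_v(|P'_v|-2)\le n$. For the edge-separator nodes ($|P'_v|=2$) I would use the potential $\Phi=\sum(|C|-2)$ summed over the current frontier of subinstances. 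An edge split only triangulates a cycle, with $|C_1|+|C_2|=|C_v|+2$, so it leaves $\Phi$ unchanged; a node with $|P'_v|=p\ge 3$ satisfies $|C_1|+|C_2|=|C_v|+2p-2$ and hence raises $\Phi$ by $2(p-2)$. Starting from $\Phi=|C|-2\le n$ at the root, $\Phi$ therefore stays $O(n)$ throughout. Since each leaf cycle has length at least three and so contributes at least one to $\Phi$, the number of leaves, and hence of internal nodes, is $O(n)$; combined with the previous inequality this yields $\sum_v|P'_v|=O(n)$.

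Finally I would bound the total cost of all $\mathbf{P\_completion}$ calls. By the termination argument accompanying Lemma~\ref{lem:computing_P_completions}, each iteration of the while loop of Algorithm~\ref{algo:P_completion} permanently extends, by at least one vertex, the prefix of the path returned at $v$ whose vertices are on both $P$-sides. As that path is $P'_v$, the number of iterations at $v$ is at most $|P'_v|$. Each iteration performs $O(1)$ $X$-searches and so costs $O(m)$. Summing over the tree and using $\sum_v|P'_v|=O(n)$ from the previous step, the total completion cost is $O(nm)$. Adding the $O(m)$ of non-completion work (path search, separation test, and construction of the two subinstances) at each of the $O(n)$ nodes contributes a further $O(nm)$, giving the claimed $O(nm)$ bound overall.

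I expect the delicate point to be the node-count bound of the middle paragraph. Because the separating path is \emph{shared} between the two children, vertices are duplicated down the tree and neither the vertex count, the cycle length, nor the edge count decreases monotonically in both children at once. The resolution is to exhibit the two complementary invariants used above: the consumed-once accounting of interior vertices controls the nodes with long separating paths and the growth of total cycle length they cause, while the triangulation potential $\Phi$ absorbs the edge-separator splits for free. Once the tree has $O(n)$ nodes and $\sum_v|P'_v|=O(n)$, the remaining estimates are the routine $X$-search costs of the first paragraph.
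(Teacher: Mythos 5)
Your proof is correct, and it reaches the paper's two key bounds---$O(n)$ recursive calls and $O(n)$ total searches inside $\mathbf{P\_completion}$---by a genuinely different route. The paper's proof disposes of both counts at once with a single geometric observation: the union $\mathcal{C}$ of all cycles ever passed to $\mathbf{2DP}$ is a planar graph, since the recursion starts from one cycle and each call adds a separating path inside an inner face; hence the number $p$ of calls and the number of edges added by $\mathbf{P\_completion}$ are both bounded by $|E(\mathcal{C})|=O(n)$, and the searches performed by all $\mathbf{P\_completion}$ calls are charged globally to $p$ plus the added edges rather than per node. You replace this with a purely combinatorial accounting: interior vertices of separating paths are consumed at most once over the whole tree (your persistence argument---once a vertex lies on the cycle of an instance it lies on the cycle of every descendant instance containing it---is sound, and it also rules out reuse across incomparable nodes, since instances at incomparable nodes intersect only in vertices of the separating path at their lowest common ancestor), and the potential $\Phi=\sum(|C|-2)$ is invariant under edge splits and increases by $2(p-2)$ otherwise, which bounds the leaves and hence the edge-split nodes. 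Your per-node bound of at most $|P'_v|$ loop iterations, justified by the prefix-growth fact the paper itself uses for termination of Algorithm~\ref{algo:P_completion}, then substitutes for the paper's count of added edges. The trade-off: the paper's planarity argument is shorter but stated informally (``it is not difficult to see\dots''), whereas your charging argument is more elementary and self-contained, avoids any appeal to planarity, and makes explicit the genuine obstruction you flag---separating paths are shared between the two children, so no single monotone measure decreases in both subinstances---at the cost of more bookkeeping. Both routes give $p=O(n)$, $K=O(nm)$, and the claimed $O(nm)$ bound.
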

\begin{proof}
  Let $(G,C)$ be an instance of $\mathbf{2DP}$ such that $G$ is
  connected. An $X$-search can be done in $O(m)$ assuming that the
  relation $-\in X$ is an attribute of vertices of $G$ accessible in
  constant time. Indeed, this is a simple variation of usual search
  algorithms.

  Beside the call to $\mathbf{P\_completion}(G,C,P)$ the complexity of
  one recursive call to Algorithm~\ref{algo:general_method} lies in
  four lines: the two $\mathbf{if}$ statements, the first computation
  of $P$, and the computation of the two subinstances. Each can be
  done in $O(m)$ time by using either a $C$-search or a $P$-search.

  Hence, if $p$ is the number of recursive calls made to
  $\mathbf{2DP}$ and $K$ the complexity of all calls to
  $\mathbf{P\_completion}$, Algorithm~\ref{algo:general_method} runs
  in $O(pm+K)$. Each time a new call to $\mathbf{2DP}$ is made, it
  means an instance has been divided in two subinstances by finding a
  $C$-paths $P$ separating $C$. It is not difficult to see that the
  union $\mathcal{C}$ of all cycles passed as arguments of
  $\mathbf{2DP}$ is a planar graph: following the recursive calls, we
  start with a cycle and we add paths separating inner faces
  recursively:
  
  \smallskip
  \begin{center}
    \includegraphics{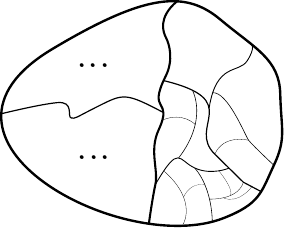}
  \end{center}
  
  \smallskip\noindent
  In the figure, the outer cycle is the first cycle $C$, while inner
  $C$-paths are the recursive separating paths $P$ computed in the
  different calls of $\mathbf{2DP}$. The thinner the line the deeper
  the recursive call.

  The number $p$ of calls to $\mathbf{2DP}$ is bounded by the number
  of edges of this planar graph, meaning $p=O(n)$. Hence the
  complexity of Algorithm~\ref{algo:general_method} is $O(nm+K)$.

  Finally we look at Algorithm~\ref{algo:P_completion}. It requires
  one $P$-search at the beginning to compute $P_1$ and $P_2$, as well
  as one $(C\cup P)$-search for each step of the while loop to compute
  the $P$-sides each vertex is on. The number of such searches among
  all calls to $\mathbf{P\_completion}$ in a run of $\mathbf{2DP}$ is
  bounded by the sum of $p$ with the number of edges added by
  $\mathbf{P\_completion}$. Both are bounded by the number of edges in
  $\mathcal{C}$ which is a $O(n)$. Hence, $K=O(nm)$ and the complexity
  of Algorithm~\ref{algo:general_method} is $O(nm)$.
\end{proof}

To compute crossings or web completions as described in
Section~\ref{sec:computing_crossings_and_web_completions}, the
overall complexity is not modified. We have to be careful about how to
represent web completions, however.

Indeed, imagine that \emph{most} vertices of $G$ are in a component
$D$ of $G-C$ that is incident to exactly one vertex of $C$
($n=|G|=O(|D|)$). If $G-D$ is a rib, then in any web completion of
$G$, all vertices of $D$ are in the clique of some triangle of $G-D$.
This clique contains $\Omega(|D|^2)=\Omega(n^2)$ edges. If $D$ is
sparse, say it is a tree, then $G$ takes only $O(n)$ space in memory
to be stored using adjacency lists while all its web completions have
$\Omega(n^2)$ edges.

To avoid this problem, instead of representing a web using adjacency
lists, we distinguish between rib and clique vertices. The adjacencies
between rib vertices are kept in the form of adjacency lists, while
the adjacencies of clique vertices are held in the form of the three
rib vertices forming their corresponding triangle. As ribs are planar
they have $O(n)$ edges. This provides an $O(n)$-space representation of
webs in memory.

\section*{Acknowledgements}
The authors would like to thank Stéphan Thomassé for fruitful
discussions and for pointing out that an algorithm could hide behind our
proof of the two paths theorem.

\clearpage
\bibliographystyle{siamplain}
\bibliography{references}
\end{document}